\documentclass{article}
\usepackage[utf8]{inputenc}
\usepackage{authblk}
\usepackage{setspace}
\usepackage[margin=1.25in]{geometry}
\usepackage{graphicx}
\graphicspath{ {./figures/} }
\usepackage{subcaption}
\usepackage{amsmath}
\usepackage{hyperref}
\usepackage{amsmath,amssymb}
\usepackage{algorithm}
\usepackage[noend]{algpseudocode}
\usepackage{float}  % 引入float包

\usepackage{enumitem}
\usepackage{graphicx}
\usepackage{amsmath, amssymb, amsthm}
\usepackage{amsthm}
\usepackage{physics}
\usepackage{mathtools}
\usepackage{bm}
\usepackage{geometry}
\usepackage{booktabs}
\usepackage{microtype}

\newtheorem{theorem}{Theorem}[section]
\newtheorem{definition}{Definition}[section]
\newtheorem{assumption}{Assumption}[section]

\newtheorem{remark}{Remark}[section]

\usepackage[style=nejm, 
citestyle=numeric-comp,
sorting=none]{biblatex}
\title{Early warning prediction: Onsager-Machlup vs Schrödinger }

\author[1,6,7]{Xiaoai Xu}
\author[2,3]{Yixuan Zhou}
\author[5]{Xiang Zhou}
\author[6,7]{Jingqiao Duan}
\author[2,3,4*]{Ting Gao}

\affil[1]{School of Mathematics and Information Science, Guangzhou University, Guangzhou 510000, China.}
\affil[2]{School of Mathematics and Statistics, Huazhong University of Science and Technology, Wuhan 430074, China.}
\affil[3]{Center for Mathematical Science, Huazhong University of Science and Technology, Wuhan 430074, China}
\affil[4]{Steklov-Wuhan Institute for Mathematical Exploration, Huazhong University of Science and Technology, Wuhan 430074, China}
\affil[5]{Department of Mathematics, City University of Hong Kong, Kowloon, Hong Kong SAR.}
\affil[6]{School of Sciences, Great Bay University, Dongguan 523000, China.}
\affil[7]{Guangdong Provincial Key Laboratory of Mathematical and Neural Dynamical Systems, Great Bay University, Dongguan 523000,  China.}
\affil[*]{Address correspondence to: tgao0716@hust.edu.cn}

\date{}

\begin{document}

\maketitle

%%%%%% Abstract %%%%%%
\begin{abstract}

Predicting critical transitions in complex systems, such as epileptic seizures in the brain, represents a major challenge in scientific research. The high-dimensional characteristics and hidden critical signals further complicate early-warning tasks. This study proposes a novel early-warning framework that integrates manifold learning with stochastic dynamical system modeling. Through systematic comparison, six methods including diffusion maps (DM) are selected to construct low-dimensional representations. Based on these, a data-driven stochastic differential equation model is established to robustly estimate the probability evolution scoring function of the system. Building on this, a new Score Function (SF) indicator is defined by incorporating Schrödinger bridge theory to quantify the likelihood of significant state transitions in the system. Experiments demonstrate that this indicator exhibits higher sensitivity and robustness in epilepsy prediction, enables earlier identification of critical points, and clearly captures dynamic features across various stages before and after seizure onset. This work provides a systematic theoretical framework and practical methodology for extracting early-warning signals from high-dimensional data.
\end{abstract}

%%%%%% Main Text %%%%%%
%A few notes on the main text:  
%Three heading levels are permitted. Only the headings listed are permitted as Level 1 headings. Authors are encouraged to indicate the level of each heading using a unique format. For example, 
%\textbf{LEVEL 1 IN BOLD CAPS}
%\textbf{Level 2 in bold}
%\textit{Level 3 in italics}

\section{INTRODUCTION}
\hspace*{1em}Complex systems are prevalent in various fields such as ecology, climate science, finance, and neuroscience. These systems often exhibit critical transition phenomena, where a slight disturbance can lead to sudden and significant changes in the system's state \cite{scheffer2001catastrophic}. In ecological systems, lakes may rapidly change from a clear state with rich biodiversity to a turbid state with algal blooms due to factors such as eutrophication, resulting in a large number of fish deaths and the disruption of ecological balance \cite{scheffer2001catastrophic}. In medicine, the occurrence of a disease is usually not a smooth process, but rather a sudden transition that occurs when a critical point is reached \cite{liu2011early,chen2012detecting}. These critical points represent the key states of disease transition, such as the transition from health to disease \cite{deb2022identifying}. In the social-economic field, financial markets may collapse suddenly, bringing huge impacts to society \cite{fabiani2024task}.

A critical point is the point at which the system behavior undergoes a qualitative change when the system parameters reach a certain threshold, at which point the system becomes highly sensitive to perturbations. Critical points can be divided into three categories: bifurcation-related critical points (B-tipping), noise-induced critical points (N-tipping), and rate-dependent critical points (R-tipping) \cite{ashwin2012tipping}. N-tipping refer to the phenomenon wherein a system departs from the neighborhood of a quasi-static attractor driven by stochastic fluctuations noise. This process is theoretically independent of parameter changes crossing bifurcation points; noise alone can trigger an abrupt state transition in the system, which is our focus for this paper. Traditional analytical frameworks predominantly concentrate on the examination of specific statistical features, such as variance amplification and changes in autocorrelation, among others \cite{scheffer2009early}. Nevertheless, the analysis of complex systems characterized by high-dimensional data and intricate dynamic properties presents substantial challenges to these conventional methods.

Recognizing the challenges posed by high-dimensional data, researchers have dedicated extensive research efforts to developing robust analytical frameworks and methodological innovations. Zheng et al. formalize global resilience in tropical forests, within a dynamical systems framework, as the mean exit time. Researchers' analysis with stochastic differential equations demystified the transition behavior of unstable critical points in multi-stable systems, offering a formal dynamical method for tipping point analysis \cite{zheng2023mean}. The study of critical points in complex networks necessitates the consideration of network topology and other characteristics, which differ from the analysis of dynamical systems in low dimension. The Early Warning Signal Network (EWSNet) framework provides a systematic methodology for detecting precursory signals prior to critical transitions or systemic collapse through comprehensive analysis of network topological configurations and temporal dynamics. Complex networks are applied to analyze gene regulatory networks, protein-protein interaction networks, and other biological systems. EWSNet can help identify key nodes and early warning signals in the occurrence and development of diseases \cite{olgun2024novel,zhong2025identification}. The Dynamic Network Biomarker (DNB)  method is a commonly used approach based on bifurcation theory to detect critical points of phase transitions in complex systems from data \cite{tao2023data}. The Master Stability Function (MSF) method can be utilized to analyze the loss of synchronization stability in complex networks, thereby predicting abrupt transitions in network dynamical behavior. There are also related studies on constructing predictive models using advanced technologies such as deep learning. For example, the EWSNet detects early warning signals of critical transitions by training a deep neural network using a simple theoretical model, and this network can distinguish the types of critical transitions \cite{deb2022machine}. Feng et al. utilize latent stochastic dynamical systems to extract early warning signals from high-dimensional observational data \cite{feng2024early}. In addition, there are deep learning algorithms based on two-dimensional convolutional neural networks (CNN) and long short-term memory networks (LSTM) for predicting the emergence of critical points in dynamic systems \cite{zhuge2024deep}. Zhang et al. propose a two-level optimization framework that combines, utilizing the predicted probability by the meta-network as an early warning indicator \cite{zhang2025early}. This approach captures critical signals before seizure onset than variance indicators, providing an effective solution for early warning of epilepsy patients \cite{zhang2025early}.

The analysis of rare events, characterized by an exceedingly low probability of occurrence, presents a significant computational challenge. Conventional Monte Carlo simulations, which often require sampling on the order of thousands of trials to capture such events, can become computationally prohibitive. The theory of large deviations provides a rigorous framework to describe their exponentially small likelihood. Within it, the Freidlin-Wentzell theory characterizes the pathwise probabilities of stochastic systems under weak noise \cite{2008The}. The Onsager-Machlup action provides a powerful tool for quantifying the probability of any path, effectively measuring its deviation from the most probable transition pathway \cite{feng2024early}. In this context, the Schrödinger bridge problem reformulates the search for a connecting stochastic process as the minimization of the OM functional. Modern numerical solutions employ diffusion models and score matching. Building on this, we propose a new ``score-function indicator" for early warning. By learning the dynamical score function, it detects early signals of a system approaching a critical, rare-event state.

The contribution of this paper is in the following aspects:

\begin{itemize}
    \item We construct a theoretical framework that integrates manifold learning with stochastic dynamics to examine the impact of dimensionality reduction.
    \item We propose a novel early-warning indicator via the score function.
    \item We compare and analyze the OM-functional indicator and the score-function indicator using real-world data.
\end{itemize}

The remaining structure of this paper is as follows: In Section 2, we first introduce the application background of manifold learning in high-dimensional data dimensionality reduction, laying the foundation for subsequent analysis. Subsequently, we present the relevant theories of the Onsager-Machlup action and the Schrödinger bridge problem. Based on these theories, we construct two warning indicators in Section 3—the Onsager-Machlup indicator and the Score Function indicator—and conduct a systematic comparative analysis of the two. To verify the effectiveness of the method, experimental validation is conducted in Section 4 using real electroencephalogram data. Furthermore, we perform a formal error analysis of the warning potential in Section 5. Finally, Section 6 summarizes the entire work and draws conclusions.

%%%%% 第二部分 %%%%%%
\section{Background Knowledge}

\subsection{Dimensionality Reduction via Manifold Learning}

\hspace*{1em} Manifold learning is a machine learning method used to discover low-dimensional structures within high-dimensional data. These methods assume that the data actually lies on or near a low-dimensional manifold embedded in a high-dimensional space. Through nonlinear transformation, manifold learning algorithms can find a low-dimensional representation (embedding) while retaining the most important structure of the original data \cite{Zheng2009}\cite{meilă2023manifold}.

Manifold learning can be viewed as an attempt to generalize linear frameworks such as Principal Component Analysis (PCA), enabling them to capture nonlinear structures in data. PCA is a linear dimensionality reduction technique that transforms the original data into a set of linearly uncorrelated variables through orthogonal transformation. Kernel Principal Component Analysis (Kernel PCA) utilizes kernel methods to non-linearly map the data into a high-dimensional space and performs PCA in this space, aiming to find a low-dimensional embedding suitable for linear classification of the originally non-linearly separable data. Isometric Mapping (Isomap) is one of the earliest manifold learning methods. This algorithm can be regarded as an extension of Multidimensional Scaling (MDS) or Kernel PCA. Isomap seeks to find a low-dimensional embedding that preserves the geodesic distances between all points \cite{tenenbaum2000global}. Locally Linear Embedding (LLE) aims to preserve the local neighborhood distance relationships of the original data by seeking a projection of the data in a low-dimensional space \cite{roweis2000nonlinear}. Neighborhood Preserving Embedding (NPE) is a linearly approximated manifold learning method that attempts to preserve the local neighborhood structure of the data after dimensionality reduction. NPE can be considered the linear version of the manifold learning method LLE. Spectral Embedding is a nonlinear dimensionality reduction method that preserves the local structure of data through the spectral decomposition of the graph Laplacian matrix \cite{belkin2003laplacian}. Orthogonal Locality Preserving Projection (OLPP) is an improved version of Locality Preserving Projection (LPP), introducing orthogonal constraints to achieve better performance. LPP is a linear dimensionality reduction method designed to preserve the local neighborhood relationships between data points, while OLPP orthogonalizes the projection directions, making the eigenvectors orthogonal to avoid redundant information. MDS is a technique that reduces dimensionality by preserving the distances or the order of distances between original data points in a low-dimensional space  \cite{mair2022more}. T-Distributed Stochastic Neighbor Embedding (T-SNE) is a nonlinear dimensionality reduction technique that preserves the local structure of data by matching probability distributions in high-dimensional and low-dimensional spaces, making it suitable for visualizing and clustering high-dimensional data \cite{belkina2019automated}. Laplacian Eigenmaps is a graph-based dimensionality reduction method aimed at preserving the local neighborhood structure of the data. Diffusion Map (DM) is a nonlinear dimensionality reduction method that utilizes the diffusion process of random walks to convert spatial distances into transition probabilities. In manifold learning, OLPP and NPE incorporate orthogonal constraints and linearized improvements, respectively, balancing discriminability with computational efficiency while preserving local structures. Kernel PCA elegantly captures global nonlinear features of data through the kernel trick. In contrast, Isomap and DM are more adept at revealing complex global geometries and multi-scale nonlinear structures. These methods perform particularly well when handling complex manifold shapes.

Although a wide variety of manifold learning methods exist, not all are equally suitable for the specific task of early warning. The core of early warning lies in capturing subtle and leading signals of critical transitions in a system from high-dimensional, nonlinear, and often noisy data. Based on this, the selection criteria in this study primarily consider the robustness of the methods, their ability to capture the intrinsic geometric structure of the data, and their sensitivity to temporal dynamics or local structures. Therefore, this study adopts six representative manifold learning methods, namely OLPP, DM, Isomap, PCA, Kernel PCA, and NPE, to conduct the research.

\subsection{Onsager-Machlup Action Functional and Schrödinger Bridge Problem}

\subsubsection{Onsager-Machlup Action Functional}

\hspace*{1em}In the  stochastic differential equation(SDE)

\begin{equation}
    dX(t) = b(X(t))dt + \sigma (X(t))dW(t),0 \leq t \leq T,
\end{equation}

\noindent what we are interested in is studying the transition phenomenon between its two metastable states. Specifically, $b(X(t)): \mathbb{R}^d \to \mathbb{R}^{d}$ represents a drift, $\delta: \mathbb{R}^d \to \mathbb{R}^{d \times u}$denotes a d×u matrix-valued function, and W denotes a Brownian motion on $R^u$.

The core idea of the Onsager-Machlup theory is to assign a probability density to these continuous trajectories (rather than just a certain end state). More precisely, it estimates the probability of the true solution $X(t)$ occurring within a minimal `tube' $\sigma$ around a given smooth reference trajectory $z(t)$ \cite{guo2024deep}

\begin{equation}
    \mathrm{P}\left\{\| X - z \|_T \leq \delta\right\} \propto \mathrm{C}(\delta, \mathrm{T})\exp\left\{-\mathrm{S}_T^{\mathrm{OM}}(z, \dot{z})\right\},
\end{equation}

\noindent where $\delta$ is positive and sufficiently small, the norm of $\|\cdot\|_T$ is defined as $\|a(t)\|_T := \sup_{0 \leq t \leq T} |a(t)|$ , and $C(\delta, T)$ is a normalization constant related to the tube radius $\delta$ and time T, but it is independent of the specific reference path z. The $exp\left\{-\mathrm{S}_T^{\mathrm{OM}}(z, \dot{z})\right\}$ is the core part of probability. It explicitly relies on the path z we choose. It indicates that the probability of the path decays exponentially as $\mathrm{S}_T^{\mathrm{OM}}$ increases. Therefore, $\mathrm{S}_T^{\mathrm{OM}}$ is called the action functional.It is akin to a `cost function' or `energy function' defined for paths, where paths with higher action functional values have lower probabilities of occurrence, and paths with lower action functional values have higher probabilities of occurrence. Its functional role is \cite{feng2024early}

\begin{equation}
    S_T^{\mathrm{OM}}(z, \dot{z}) = \frac{1}{2} \int_0^T \left[ \frac{|\dot{z} - b(z)|^2}{\sigma^2} + \nabla \cdot b(z) \right] \mathrm{d}t.
\end{equation}

\subsubsection{Schrödinger Bridge Problem}

\hspace*{1em}The Schrödinger Bridge is a mathematical framework used to solve the dynamical optimal transport problem between two probability distributions. Specifically, given an initial probability measure and a final probability measure, the Schrödinger bridge problem aims to find the time evolutionary probability density connecting given initial and final probability measures that minimizes the relative entropy.

\begin{definition}
Definition 1(Relative Entropy) \cite{leonard2013survey}: Let r be a certain  $\sigma$-finite positive measure on space y. The relative entropy of probability measure p with respect to r is loosely defined as
\begin{equation}
    H(P \mid R) := \int_Y \log(dP/dR) \, dP \in (-\infty, \infty], \quad P \in \mathrm{P}(Y).
\end{equation}
If P is much smaller than R, the value is infinite. The smaller $H(P\mid R)$ is, the more similar the new process P is to the original Brownian motion R, and thus the higher its probability of occurrence (in a large number of independent particle systems, its realization frequency is the highest).
Therefore, finding the most likely process is transformed into a variational problem of minimizing the relative entropy $H(P\mid R)$ under the marginal constraints $P_0=\mu_0$ and $P_1=\mu_1$.
\end{definition}

\begin{definition}
    Definition 2 (Dynamic Schrödinger's Problem) \cite{leonard2013survey}: The dynamic Schrödinger problem associated with a reference path measure $R \in \mathcal{M}_+(\Omega)$ is the following entropy minimization problem 

\begin{equation}
      H(P \mid R) \to \min; \quad P \in \mathrm{P}(\Omega): P_0 = \mu_0, \, P_1 = \mu_1 
\end{equation}
where $\mu_0, \mu_1 \in \mathrm{P}(\mathcal{X})$ are prescribed initial and final marginals.
\end{definition}

Its general form is as follows \cite{de2021diffusion}

\begin{equation}
    \Pi^* = \arg\min \left\{ KL(\Pi \mid P) : \Pi \in P_{N+1}, \Pi_0 \in P_{\text{data}}, \Pi_1 \in P_{\text{prior}} \right\},
\end{equation}

\noindent where \( KL(\Pi \| P) \) represents the Kullback-Leibler divergence of the path distribution \( \pi \) relative to the reference distribution \( P \), measuring the difference between the two. \( \mathcal{P}_{N+1} \) is the set of path probability distributions defined over \( N+1 \) time steps. The distributions \( \pi_0 \) and \( \pi_1 \) are the marginal distributions of the path at the initial time \( t=0 \) and the terminal time \( t=1 \), respectively, following the data distribution \( P_{\text{data}} \) and the prior distribution \( P_{\text{prior}} \). By applying an iterative optimization method with relaxed constraints to the above equation, the Iterative Proportional Fitting (IPF) is obtained. Furthermore, the original optimization of the joint probability distribution in each round is decomposed into optimizing a series of conditional distributions between adjacent time steps, thereby decomposing into a series of forward and backward conditional distributions. During the training process, the forward and backward trajectories are optimized alternately \cite{de2021diffusion}

\begin{equation}
\begin{aligned}
\Pi^{2n+1} &= \arg\min \left\{ KL\left( \Pi_{t|t+1} \mid \Pi_{t|t+1}^{2n} \right) : \Pi \in P_{N+1}, \, \Pi_N \in P_{\text{prior}} \right\} ,\\
\Pi^{2n+2} &= \arg\min \left\{ KL\left( \Pi_{t+1|t} \mid \Pi_{t+1|t}^{2n+1} \right) : \Pi \in P_{N+1}, \, \Pi_0 \in P_{\text{data}} \right\},
\end{aligned}
\end{equation}

The Schrödinger bridge provides a generative model for time series. In this framework, the evolution and generation of new time series samples can be described and simulated via a stochastic differential equation (SDE) \cite{hamdouche2023generative}. Specifically, Chen et al. demonstrate that the optimal solution to this problem takes the form of the SDE \cite{Chen2021LikelihoodTO}

\begin{equation}
\begin{aligned}
\mathrm{d}{\bf X}_{t}&=\left(f({\bf X}_{t},t)+g^{2}(t)\nabla\log\boldsymbol{\Psi}({\bf X}_{t},t)\right)\mathrm{d}t+g(t)\mathrm{d}\mathbf{W}_{t}, \, X_{0}\sim p_{\mathrm{data}},\\
\mathrm{d}{\bf X}_{t}&=\left(f({\bf X}_{t},t)-g^{2}(t)\nabla\log\boldsymbol{\hat{\Psi}}({\bf X}_{t},t)\right)\mathrm{d}t+g(t)\mathrm{d}\bar{\mathbf{W}}_{t}, \, X_{T}\sim p_{\mathrm{prior}},
\end{aligned}
\end{equation}

\noindent where $\mathbf{W}_{t}$ is Wiener process and $\bar{\mathbf{W}}_{t}$ is its time reversal. $\boldsymbol{\Psi},\boldsymbol{\hat{\Psi}}\in\mathcal{C}^{2,1}\left([0,T],\mathbb{R}^{d}\right)$ are time-varying energy potentials that constrained by the interconnected PDEs \cite{zhang2024action}

\begin{equation}
\begin{cases}
\frac{\partial\boldsymbol{\Psi}}{\partial t}=-\nabla_{x}\boldsymbol{\Psi}^{\top}f-\frac{1}{2}\mathrm{Tr}\left(g^{2}\nabla_{x}^{2}\boldsymbol{\Psi}\right), \\
\frac{\partial\boldsymbol{\hat{\Psi}}}{\partial t}=-\nabla_{x}\cdot\left(\boldsymbol{\hat{\Psi}}f\right)+\frac{1}{2}\mathrm{Tr}\left(g^{2}\nabla_{x}^{2}\boldsymbol{\hat{\Psi}}\right),
\end{cases}
\end{equation}

\noindent such that $\boldsymbol{\Psi}(x,0)\boldsymbol{\Psi}(x,0)=p_{\mathrm{data}},\boldsymbol{\Psi}(x,T)\boldsymbol{\Psi}(x,T)=p_{\mathrm{prior}}.$

\subsubsection{Score Matching Method}
\hspace*{1em}Applying the six selected dimensionality reduction methods to processed data can effectively predict critical points, significantly aiding our early warning efforts (see Experiment 4). We then learn from the reduced-dimensionality data, derive its stochastic differential equation through data-driven methods, and ultimately obtain the Schrödinger bridge equation by using fractional matching to derive the scoring function term S(t).

The methods for obtaining the scoring function S(t) include score matching (SM), sliced score matching (SSM), Physics-Informed Score Matching Networks (Score-PINN), denoising score matching, etc \cite{song2020sliced,hu2024score,olga2021denoising}. 

Score Matching (SM) is a key technique for estimating parameters in unnormalized statistical models. It circumvents the computationally challenging partition function (normalizing constant) required in maximum likelihood estimation by minimizing the discrepancy between the model’s score function and the true data score function, thereby enabling parameter inference without evaluating the intractable normalizing constant. However, Score Matching encounters significant computational bottlenecks when applied to high-dimensional data or complex models, as its objective depends on the second-order derivative (Hessian matrix) of the log-density, whose computation and storage become prohibitive in high-dimensional settings. Although subsequent approaches, such as Score-PINN, have partially alleviated these high-dimensional modeling difficulties, they still suffer from issues like high memory demand and suboptimal computational efficiency. In contrast, the SM framework itself offers a strong theoretical foundation. It directly optimizes the score-based objective, often delivering good computational efficiency and faster convergence when model structure and dimensionality are moderate, thereby maintaining a well-balanced trade-off between theoretical rigor and practical applicability.

The goal of Score Matching is to minimize the following loss function \cite{hu2024score}

\begin{equation}
    L_{\text{SM}}(\theta) = \mathbb{E}_{t \sim \text{Unif}[0,T]} \mathbb{E}_{\boldsymbol{x} \sim p_t(\boldsymbol{x})} \left[ \lambda(t) \left\| \boldsymbol{s}_t(\boldsymbol{x}; \theta) - \nabla_{\boldsymbol{x}} \log p_t \left( \boldsymbol{x} \right) \right\|^2 \right].
\end{equation}

The sample $\boldsymbol{x} \sim p_t(\boldsymbol{x})$ can be obtained via SDE discretization. Unif[0, T] denotes the uniform time sampling between 0 and T. Consequently, we can obtain the score function $ \boldsymbol{s}_t(\boldsymbol{x}; \theta) = \boldsymbol{s}_t(x)$ via optimizing $\theta$ in the loss function.

The specific method can be seen in the Figure \ref{fig: Overall Framework Flowchart} below.

\begin{figure}[htbp]  % 添加位置参数
  \centering
  \includegraphics[width=0.75\textwidth]{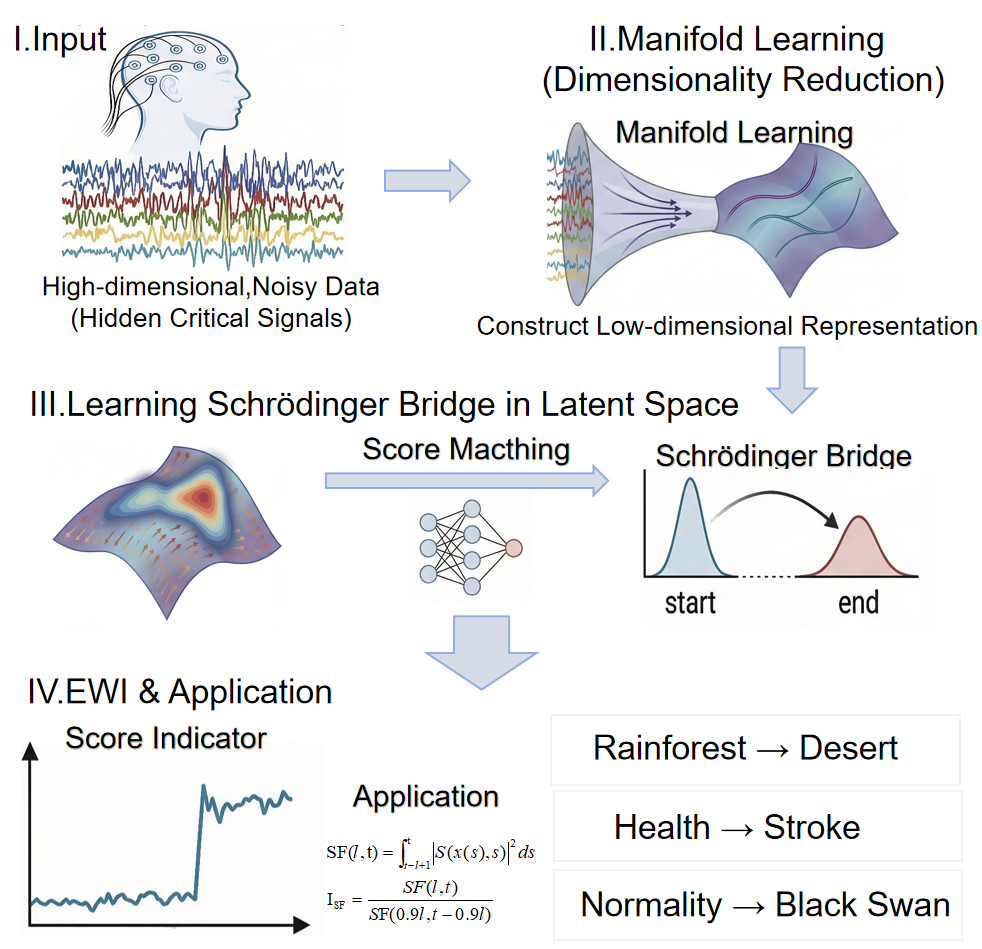}
  \caption{\small Overall Framework Flowchart}
  \label{fig: Overall Framework Flowchart}
\end{figure}

% \begin{algorithm}[H]
% \caption{Early Warning System via Manifold Learning and Score Matching}
% \label{alg:early_warning}

% \begin{algorithmic}
% \State \textbf{Input:} time series $X \in \mathbb{R}^{d \times n}$

% \State \textbf{Output:} Warning indicator, visualization results

% \State \textbf{Step 1: Dimensionality Reduction}
% \State $\phi, \Lambda \leftarrow \text{Manifold Learning}(X, \text{method}=\text{`DDM', `DM',` kernel PCA',`t-SNE', `MDS' and others})$

% \State \textbf{Step 2: SDE Learning}  
% \State $\theta_\mu, \theta_\sigma \leftarrow \text{Learn SDE}(\phi, \Delta t=1/16)$

% \State \text{Learned SDE: } 
% $$dX_t = \mu(X_t; \theta_\mu)dt + \sigma(X_t; \theta_\sigma)dW_t$$
% %$$dX_t = \mu(X_t; \theta_\mu)dt + \sigma(X_t; \theta_\sigma)dL_t$$

% \State \textbf{Step 3: Score Function Learning (Schrödinger Bridge)}

% \State $s_t(x; \theta) \leftarrow \text{Score Matching}(\phi, \theta_\mu, \theta_\sigma, T_{\text{max}}, n_{\text{steps}})$

% \State \text{Learn: } $s_t(x; \theta) \approx \nabla_x \log p_t(x)$
% \State \text{via: } $\theta = \arg \min_\theta \mathbb{E}[\|s_t(x; \theta) - \nabla_x \log p_t(x)\|^2]$

% \State \textbf{Step 4: Warning Indicator}
% \State $I_{\text{warning}}, T_{\text{axis}} \leftarrow \text{Compute Warning Indicator}(\phi, s_t(\cdot; \theta), \text{window\_size})$
% \State \text{Indicator: } $I(t) = \int_{\text{window}} |s(x, \tau)|^2  d\tau$

% % \State \textbf{Output}
% \State \textbf{return} $\{\phi, \Lambda, \theta_\mu, \theta_\sigma, s_t(\cdot; \theta), I_{\text{warning}}, T_{\text{axis}}\}$
% \end{algorithmic}
% \end{algorithm}

\section{Early Warning Indicators}

\subsection{Onsager-Machlup Indicator}
\hspace*{1em}Let the time series $x(t)$ be a random variable controlled by a stochastic differential equation in latent space. Construct two neural networks $\mu_\theta$ and $\eta_\theta$ to estimate $\mu$ and $\sigma$, with their inputs and weights represented by basis functions and coefficients, respectively. Here, we define the Onsager-Machlup (OM) function according to Eq(3) \cite{feng2024early}

\begin{equation}
OM(l, t) = \frac{1}{2} \int_{t - l + 1}^{t} \left[ \frac{|\dot{x}(s) - \mu_{\theta}(x(s))|^2}{\eta_{\theta}^2} + \nabla \cdot \mu_{\theta}(x(s)) \right] ds,
\end{equation}
where $\dot{z}$ is the derivative function written as $\dot{x}(s) \approx \frac{x(s + ds) - x(s - ds)}{2ds}$ with the time step $ds$.

The OM indicator is formulated as follows

\begin{equation}
    I_{OM} = \frac{OM(l,t)}{OM(0.9l,t-0.9l)}
\end{equation}
%到时候补充详细一点
\subsection{Score Function Indicator}
\hspace*{1em}The degree of imbalance can be measured by the generation of entropy. Entropy is defined as

\begin{equation}
    S = - \int P \ln P \, dx.
\end{equation}

For continuous dynamical systems, statistical behavior is governed by the probability density distribution \( P(x,t) \). The temporal evolution of the probability density is determined by the Fokker-Planck equation \cite{risken1989fokker}.

Inspired by entropy production and the OM functional, we propose an early-warning indicator for the Schrödinger bridge, which we call the Score Function (SF) Indicator

\begin{equation}
    \text{SF(l,t)} = E_x \int_{t-l+1}^t \|\nabla_{x(s)} \ln \varphi\|^2 ds = \int_{t-l+1}^t \|S(s)\|^2 ds. 
\end{equation}

The SF indicator is formulated as follows

\begin{equation}
    I_{SF} = \frac{SF(l,t)}{SF(0.9l,t-0.9l)}
\end{equation}

\subsection{Comparison of Two Indicators}

\hspace*{1em}The SF indicator is essentially identical to the OM functional indicator. Specifically, the SF indicator is derived by substituting the score function obtained through training into the Schrödinger bridge equation Eq(18), and then incorporating it into the first term—namely, the large deviation term—of the OM functional Eq(11), thereby resulting in a new indicator.

In the early warning analysis of critical transitions in complex systems, the OM functional indicator and the SF indicator offer distinct monitoring approaches from the perspectives of probability functionals and thermodynamics, respectively. The OM functional, based on the principle of least action, quantifies the likelihood of state transitions through path probability functionals. Its analysis relies on systems with well-defined stochastic differential equation models (such as chemical or neuronal dynamics with known drift and diffusion terms) and often exhibits steep extrema or bifurcation behavior in the functional before critical points, indicating the system’s tendency toward a new state. In contrast, the SF indicator leverages the time-varying characteristics of the score function $S(t)$ to measure the irreversibility of processes. The magnitude of $S(t)$ corresponds to the degree of deviation from equilibrium; a higher value signifies a more marked deviation. This indicator does not require prior models and can be estimated directly from observed time series (e.g., climate, ecological, or biomedical signals), making it suitable for nonlinear and non-stationary systems. Experiments (see Chapter 4) show that SF indicators often exhibit sharp peaks or sustained upward trends in critical regions, accompanied by significant changes in statistical properties such as variance and autocorrelation. The two indicators are complementary in methodology; the OM indicator applies to systems with clear dynamical models and has a well-defined mathematical-physical foundation, while the SF indicator is more suitable for complex systems with abundant data but unclear mechanisms, offering broader applicability and computational feasibility.

\section{Experiment}

\subsection{Data Processing and Dimensionality Reduction}
\hspace*{1em}The Electroencephalography (EEG) data processing pipeline employed in this study is as follows. The raw data consist of multi-channel matrices, sampled at 256 Hz during pre-ictal and ictal periods. First, amplitude normalization was applied to scale the data to the range $[-0.5, 0.5]$. To reduce computational load, downsampling was subsequently performed by selecting one point from every 16 consecutive points, resulting in an effective sampling interval of 0.0625 s. Furthermore, based on medical annotations, $T = 500$ was defined as the critical time point for seizure onset.

The results demonstrate that although several dimensionality reduction methods can distinguish pre- and post-critical transition data in epileptic seizures, we ultimately selected OLPP, DM, Isomap, PCA, Kernel PCA, and NPE for constructing early-warning indicators (see Table~\ref{tab: dim-score}(a,d)). This selection is based on a comprehensive evaluation where these six methods collectively exhibited superior sensitivity, robustness, and interpretability in capturing critical transitions. Specifically, all six methods consistently detected a distribution shift around time point $T \approx 600$, yet with distinct behavioral patterns: For instance, kernel PCA exhibited significant mean shifts and increased variance after the transition, whereas DM, Isomap, OLPP, and NPE maintained stable mean values but displayed markedly elevated variance post-transition. Overall, OLPP, DM, Isomap, PCA, Kernel PCA, and NPE not only provided clearer separation between pre- and post-seizure states but also offered a balanced trade-off between preserving local structures and capturing global manifold geometry, making them particularly suitable for the subsequent derivation of probabilistic early-warning measures.

\subsection{Learning SDE in Latent Space}

\hspace*{1em}Our objective is to solve the Schrödinger bridge problem. Specifically, we aim to find effective parametric representations for the diffusion and drift terms of its associated stochastic differential equation. To this end, we employ neural networks to approximate the equation and learn the score function via score matching, ultimately achieving the construction of the Schrödinger bridge.

By employing polynomial basis functions for fitting, with the time step set as $\Delta t = 0.0625$, the entire time domain is discretized into 1500 time points, thereby generating 1499 data pairs of consecutive time points $\{(z(t_{i+1}), z(t_i))\}$. Among these, 1200 pairs are used for training, while the remaining 299 pairs are reserved for validation. Through this fitting process, we obtain the following SDE

\begin{equation}
    dX = f(x,t)dt + g(x,t) dW_t
\end{equation}
This article focuses on the SDE corresponding to anisotropic diffusion maps, which perform better in dimensionality reduction. The SDE learned from various manifold learning dimensionality reduction methods is shown in the following Table \ref{tab: SDE learn}.

\begin{table}[htbp]
\centering
\caption{SDE learned by various manifold learning dimensionality reduction methods}
\label{tab: SDE learn}
\renewcommand{\arraystretch}{1.5}
\begin{tabular}{ccc}
\toprule
\textbf{Result/Method} & \textbf{Latent SDE} & \textbf{Evaluation} \\
\midrule

\begin{minipage}[c]{2.5cm}
\centering
\textbf{Isomap}
\end{minipage}
& 
\begin{minipage}[c]{5cm}
\centering
$\displaystyle d{X}_t = (-0.0121+0.4773x+0.5140x^2-1.5114x^3)\mathrm{d}t -0.5023dW_t$
\end{minipage}
& 
\begin{minipage}[c]{5cm}
\centering
\includegraphics[width=0.85\linewidth, height=2cm]{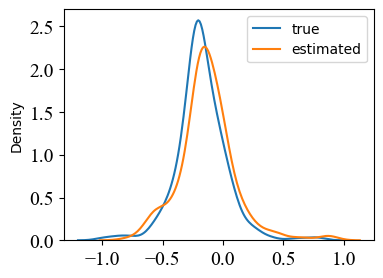}
\end{minipage} \\

\addlinespace

\begin{minipage}[c]{2.5cm}
\centering
\textbf{Kernel PCA}
\end{minipage}
& 
\begin{minipage}[c]{5cm}
\centering
$\displaystyle d{X}_t = (0.2160+0.4978x-0.5168x^2-1.4849x^3)\mathrm{d}t -0.4264dW_t$
\end{minipage}
& 
\begin{minipage}[c]{5cm}
\centering
\includegraphics[width=0.85\linewidth, height=2cm]{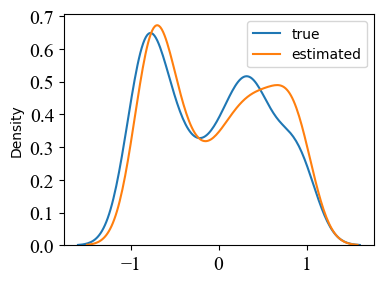}
\end{minipage} \\

\addlinespace

\begin{minipage}[c]{2.5cm}
\centering
\textbf{NPE}
\end{minipage}
& 
\begin{minipage}[c]{5cm}
\centering
$\displaystyle d{X}_t = (0.0022+0.4823x-0.5070x^2-1.5086^3)\mathrm{d}t -0.5006dW_t$
\end{minipage}
& 
\begin{minipage}[c]{5cm}
\centering
\includegraphics[width=0.85\linewidth, height=2cm]{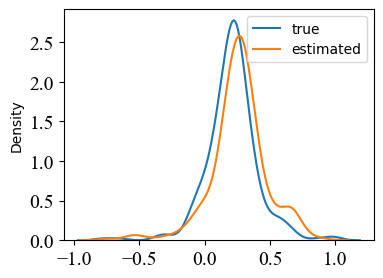}
\end{minipage} \\

\addlinespace

\begin{minipage}[c]{2.5cm}
\centering
\textbf{OLPP}
\end{minipage}
& 
\begin{minipage}[c]{5cm}
\centering
$\displaystyle d{X}_t = (-0.0163+0.4810x+0.5038x^2-1.5082^3)\mathrm{d}t -0.4897dW_t$
\end{minipage}
& 
\begin{minipage}[c]{5cm}
\centering
\includegraphics[width=0.85\linewidth, height=2cm]{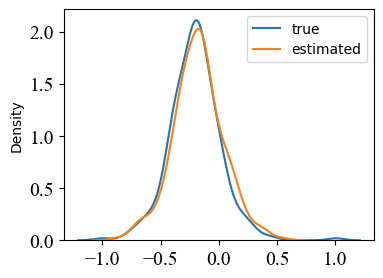}
\end{minipage} \\

\addlinespace

\begin{minipage}[c]{2.5cm}
\centering
\textbf{DM}
\end{minipage}
& 
\begin{minipage}[c]{5cm}
\centering
$\displaystyle d{X}_t = (-0.0650+0.4746x-0.5048x^2-1.5105x^3)\mathrm{d}t -0.4711dW_t$
\end{minipage}
& 
\begin{minipage}[c]{5cm}
\centering
\includegraphics[width=0.85\linewidth, height=2cm]{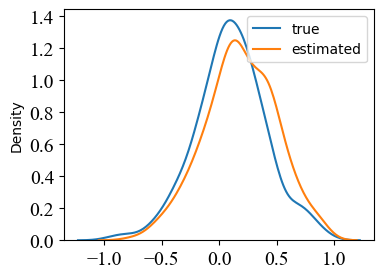}
\end{minipage} \\

\addlinespace

\begin{minipage}[c]{2.5cm}
\centering
\textbf{PCA}
\end{minipage}
& 
\begin{minipage}[c]{5cm}
\centering
$\displaystyle d{X}_t = (0.0026+0.4809x+0.5067x^2-1.5101x^3)\mathrm{d}t -0.5026dW_t$
\end{minipage}
& 
\begin{minipage}[c]{5cm}
\centering
\includegraphics[width=0.85\linewidth, height=2cm]{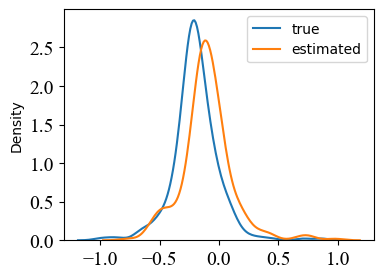}
\end{minipage} \\

\bottomrule
\end{tabular}

\vspace{0.2cm}
\footnotesize 
\caption*{The probability density comparison between the true distribution and the model-predicted distribution is shown in the table. Based on 1499 consecutive data pairs generated with $dt = 0.0625$ (1200 for training and 299 for validation), two separate single-layer networks were employed to approximate the drift term (using the polynomial basis [1, x, x², x³]) and the diffusion term (using a constant basis with Softplus activation) of an SDE. The model was trained for 100 epochs using the AdamW optimizer (learning rate = 0.005) with a batch size of 1200 and a negative log-likelihood loss function, ultimately achieving close agreement between the true and predicted distributions.}
\end{table}

The consistency between our approximate and true data densities is evident (as shown in Table \ref{tab: dim-score}). Among them, Kernel PCA exhibits two metastable states in the near-sighted view of SDE under the projected coordinates, while OLLP, Isomap, NPE, PCA, and DM display a single peak in the near-sighted view under the projected coordinates.

\subsection{Learning Score Function via Score Matching}
The stochastic differential equation corresponding to the Schrödinger bridge was solved using a direct sampling method based on score matching, applied to the fitted SDE. This approach utilizes a 3-layer fully connected neural network to approximate the score function. The network's input and output dimensions are consistent with the dimensionality of the SDE, with the hidden layer dimension uniformly set to 218. Training was performed using the Adam optimizer with a learning rate of 1e-3. Through this method, we ultimately obtained the following Schrödinger bridge equation

\begin{equation}
    dX = (f(x,t) - g^2(x,t)S(x,t))dt + g(x,t)dW_t,
\end{equation}

\noindent where $S$ denotes the score function, defined as the gradient with respect to $x$ of the logarithm of the probability density $p$, i.e., $S = \nabla_x \log p(x)$.

To evaluate the accuracy of the learned score function, we trained the network using 1000 SDE trajectories, resulting in a score function of dimension (1500, 1000). Table \ref{tab: dim-score} (b,e) presents a comparison between the learned score function and the true score function at four specific time points. The results demonstrate close agreement between the two, indicating that our model effectively approximates the true score function. The learned score function based on our projection coordinates and the corresponding projection coordinates are shown in Table \ref{tab: dim-score} (c,f).

\begin{table}[htbp]
\centering
\caption{$\phi_1$ and the Score Function Across Manifold Learning Methods}
\label{tab: dim-score}
\begin{tabular}{lccc}
\toprule
\textbf{Result/Method} & \textbf{OLPP} & \textbf{DM} & \textbf{Isomap} \\
\midrule
(a)$\phi_1$ Data & 
\includegraphics[width=0.23\linewidth]{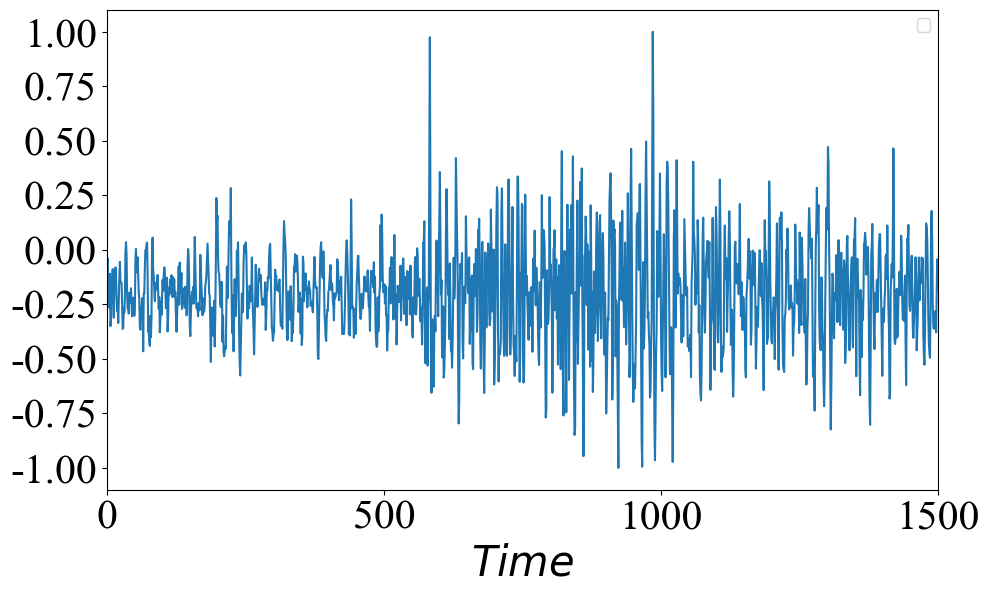} &
\includegraphics[width=0.23\linewidth]{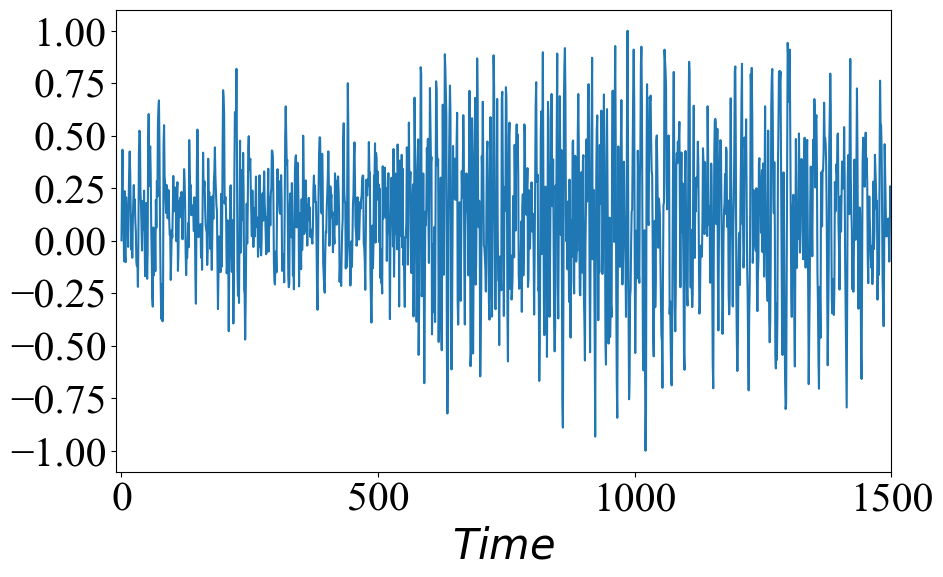} &
\includegraphics[width=0.23\linewidth]{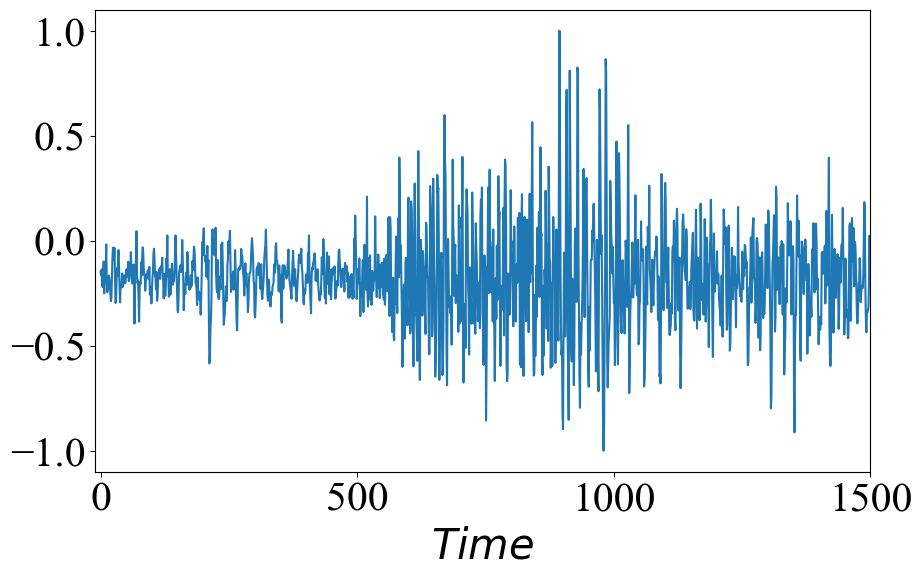} \\
\addlinespace[3pt]
(b)Ture vs Learn & 
\includegraphics[width=0.23\linewidth]{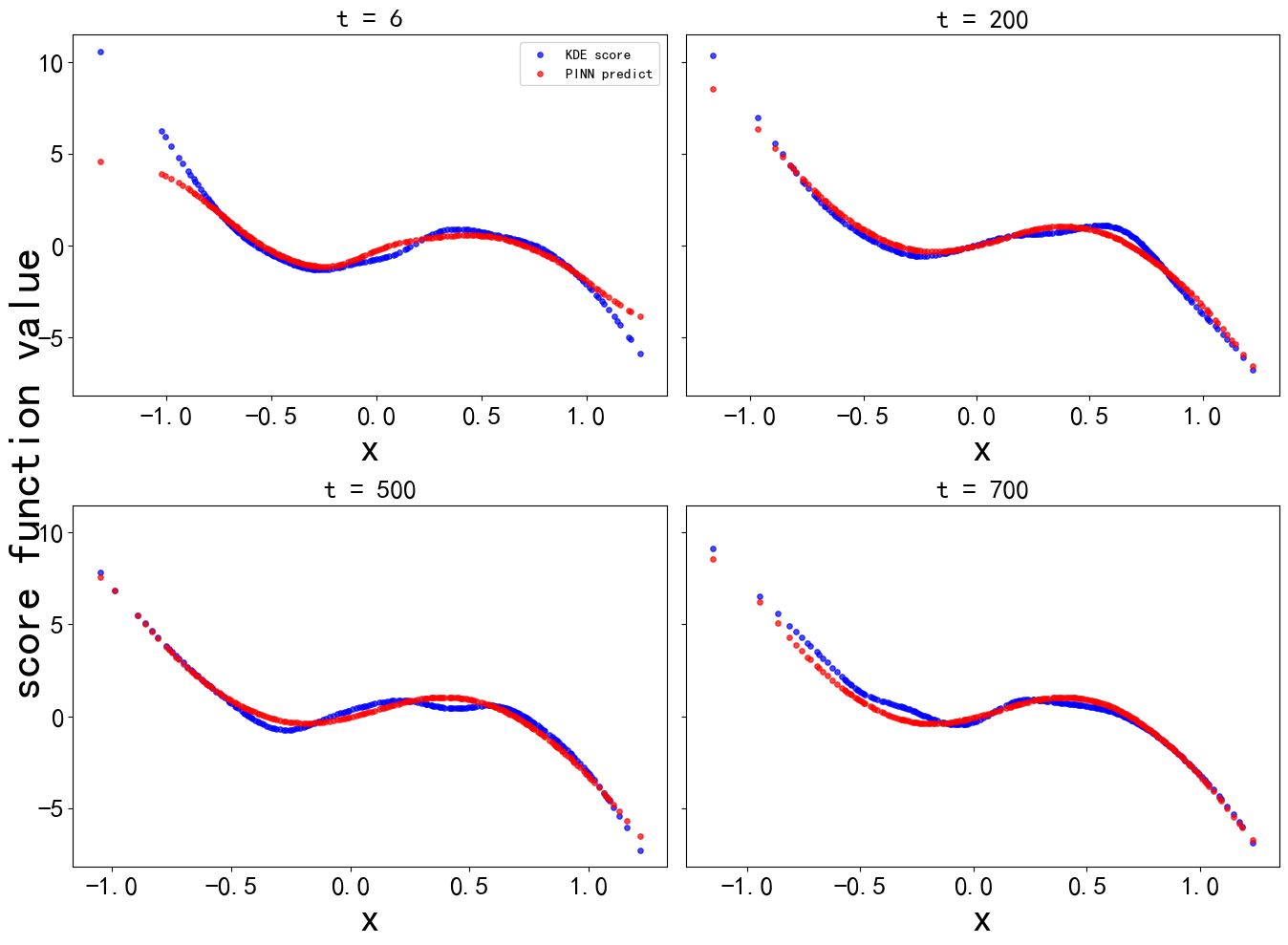} &
\includegraphics[width=0.23\linewidth]{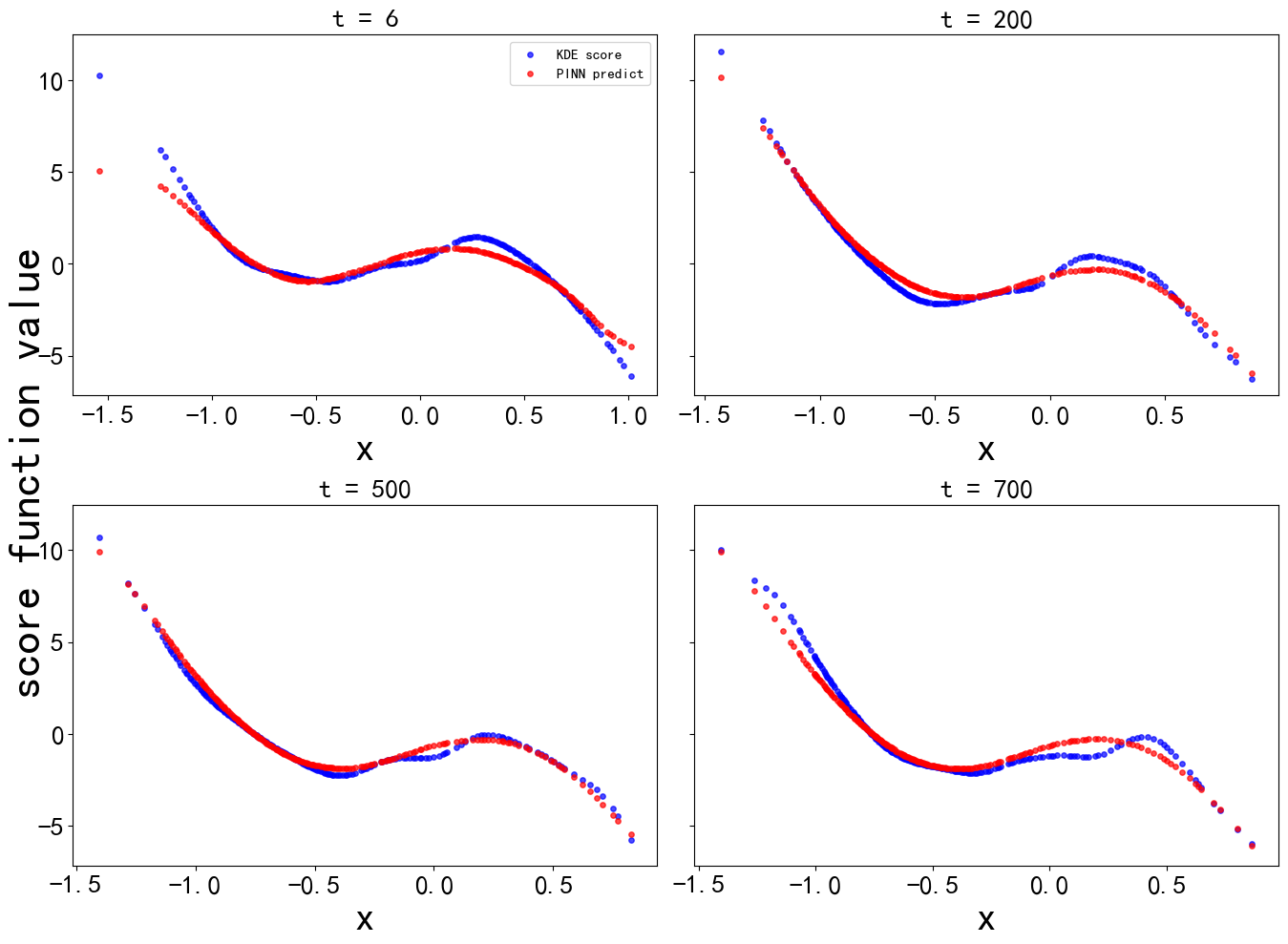} &
\includegraphics[width=0.23\linewidth]{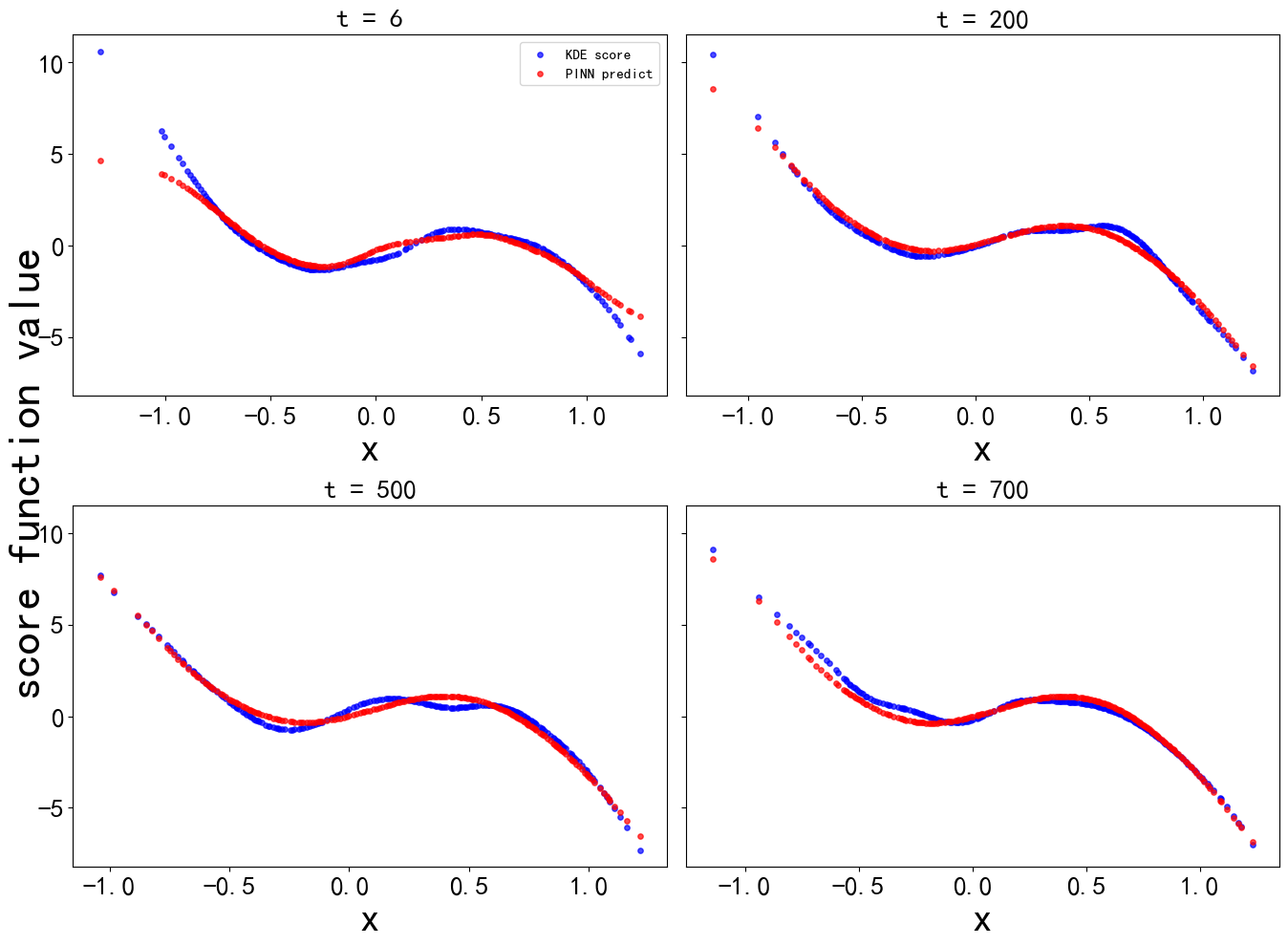} \\
\addlinespace[3pt]
(c)Score Function & 
\includegraphics[width=0.23\linewidth]{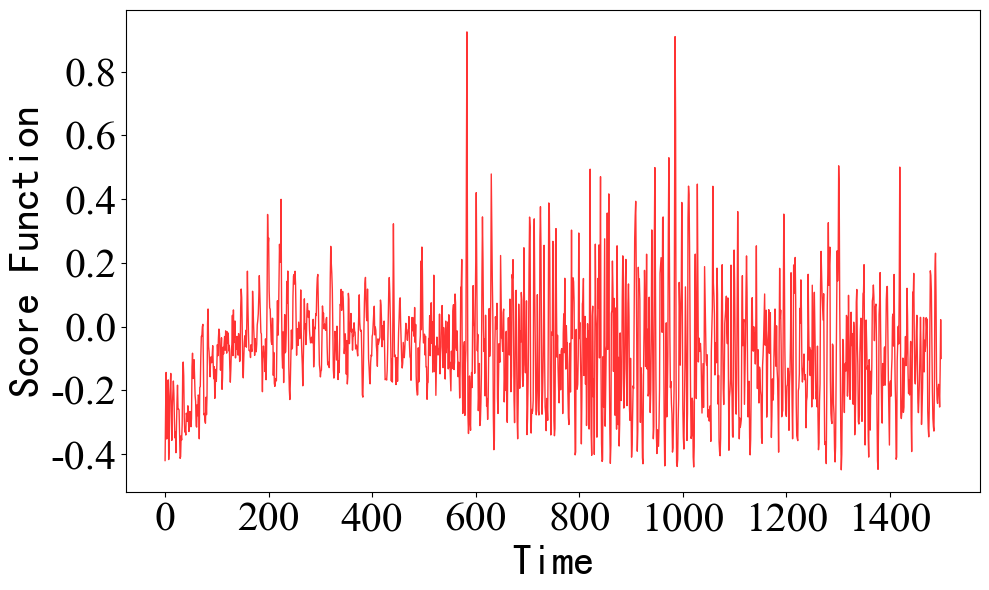} &
\includegraphics[width=0.23\linewidth]{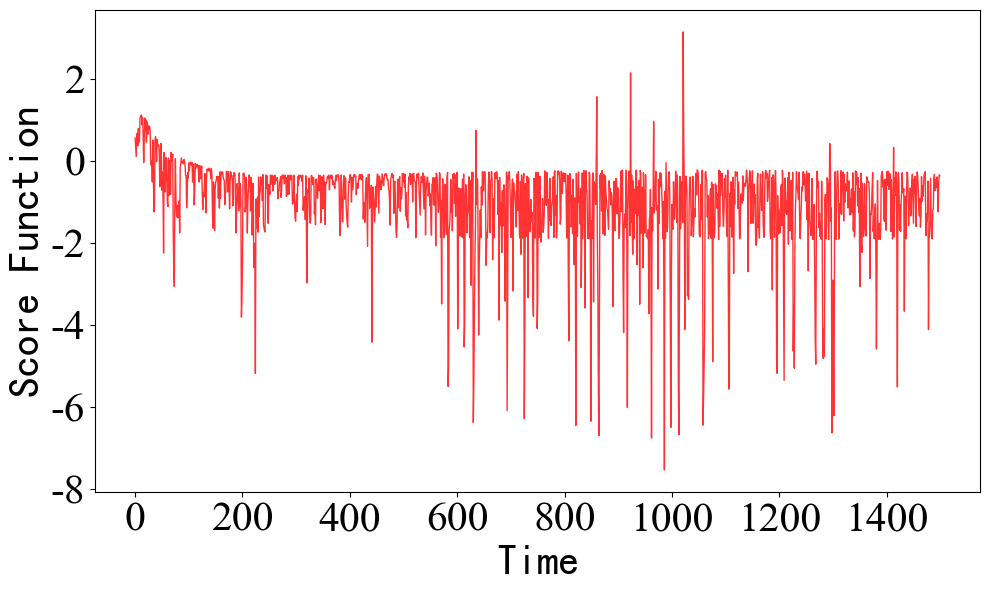} &
\includegraphics[width=0.23\linewidth]{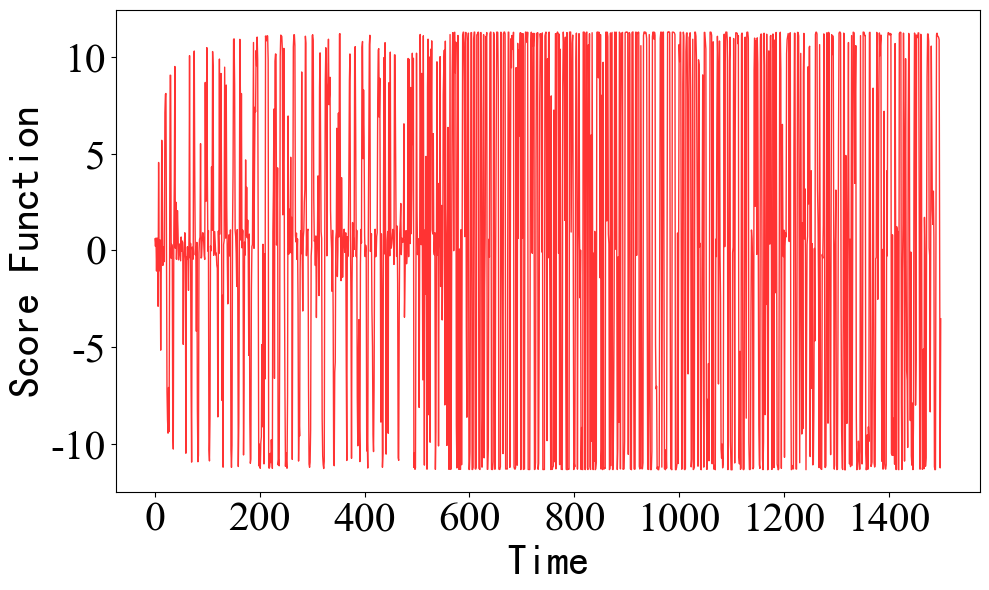} \\
\midrule
\textbf{Result/Method} & \textbf{NPE} & \textbf{PCA} & \textbf{Kernel PCA} \\
\midrule
(d)$\phi_1$ Data & 
\includegraphics[width=0.23\linewidth]{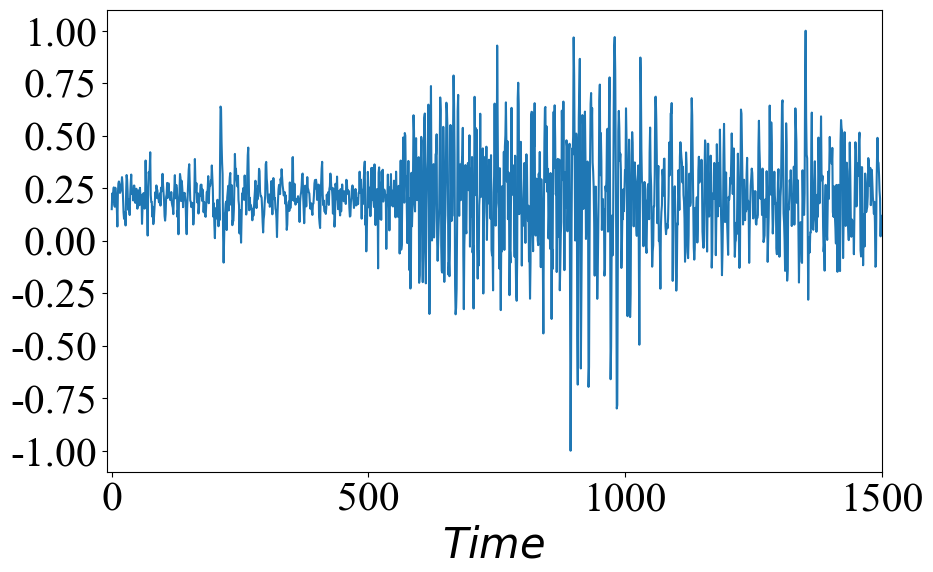} &
\includegraphics[width=0.23\linewidth]{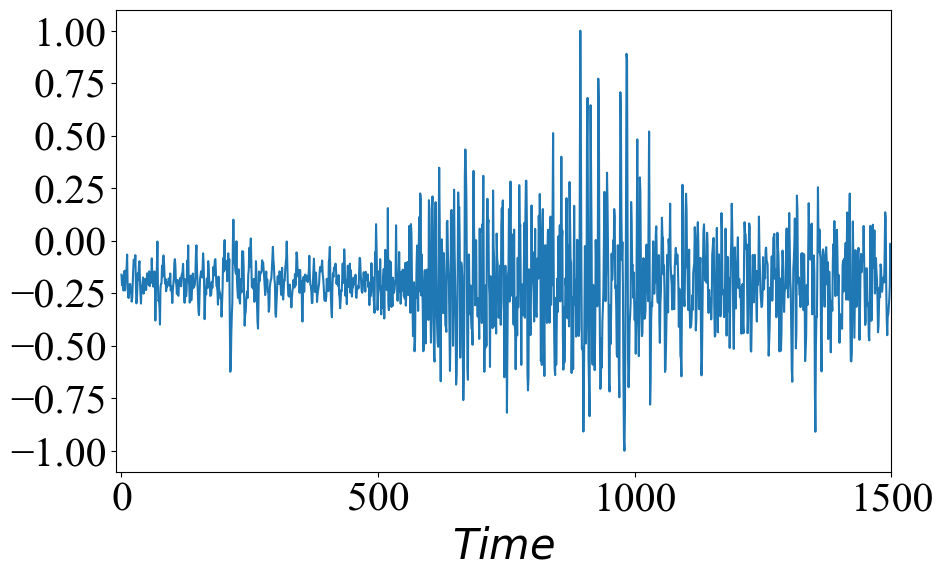} &
\includegraphics[width=0.23\linewidth]{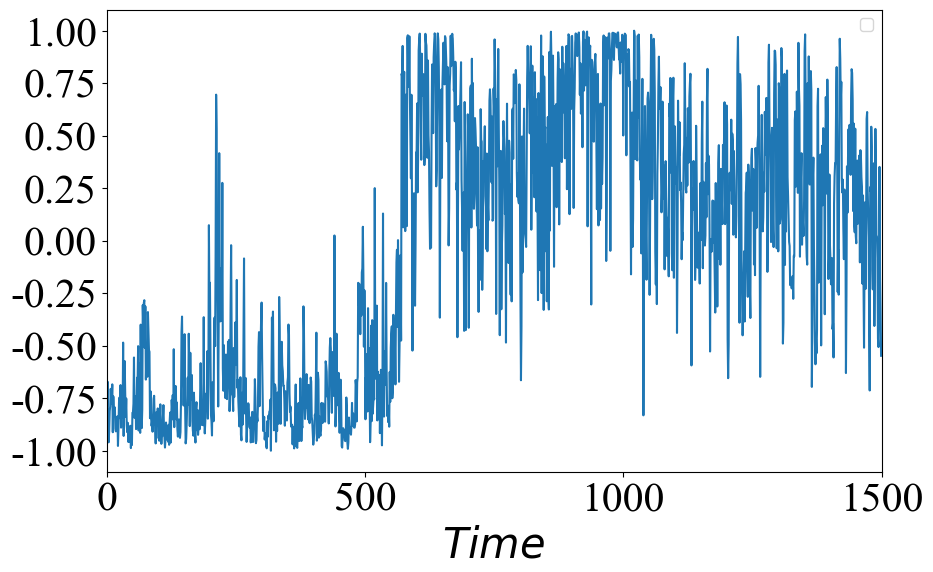} \\
\addlinespace[3pt]
(e)Ture vs Learn& 
\includegraphics[width=0.23\linewidth]{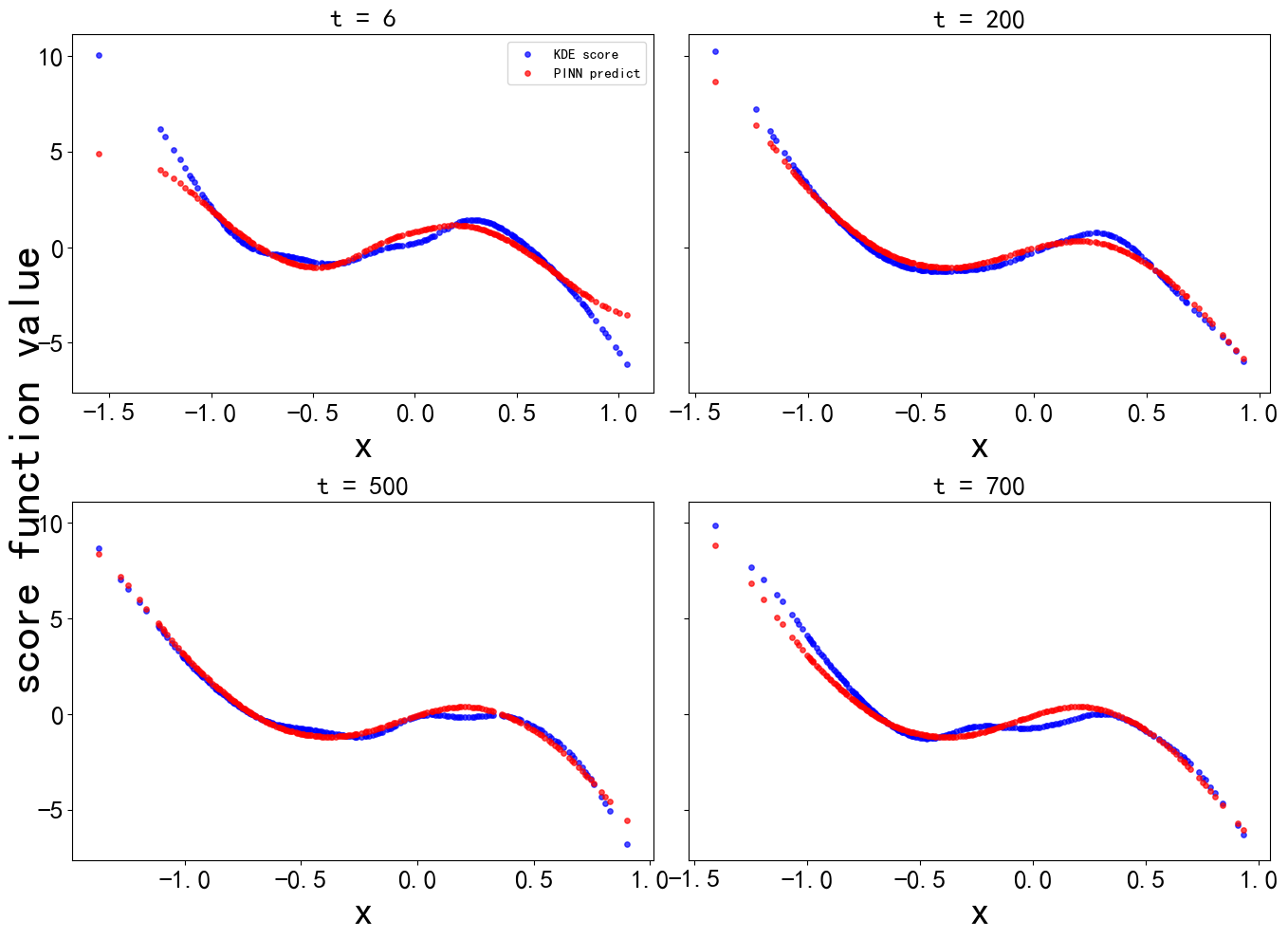} &
\includegraphics[width=0.23\linewidth]{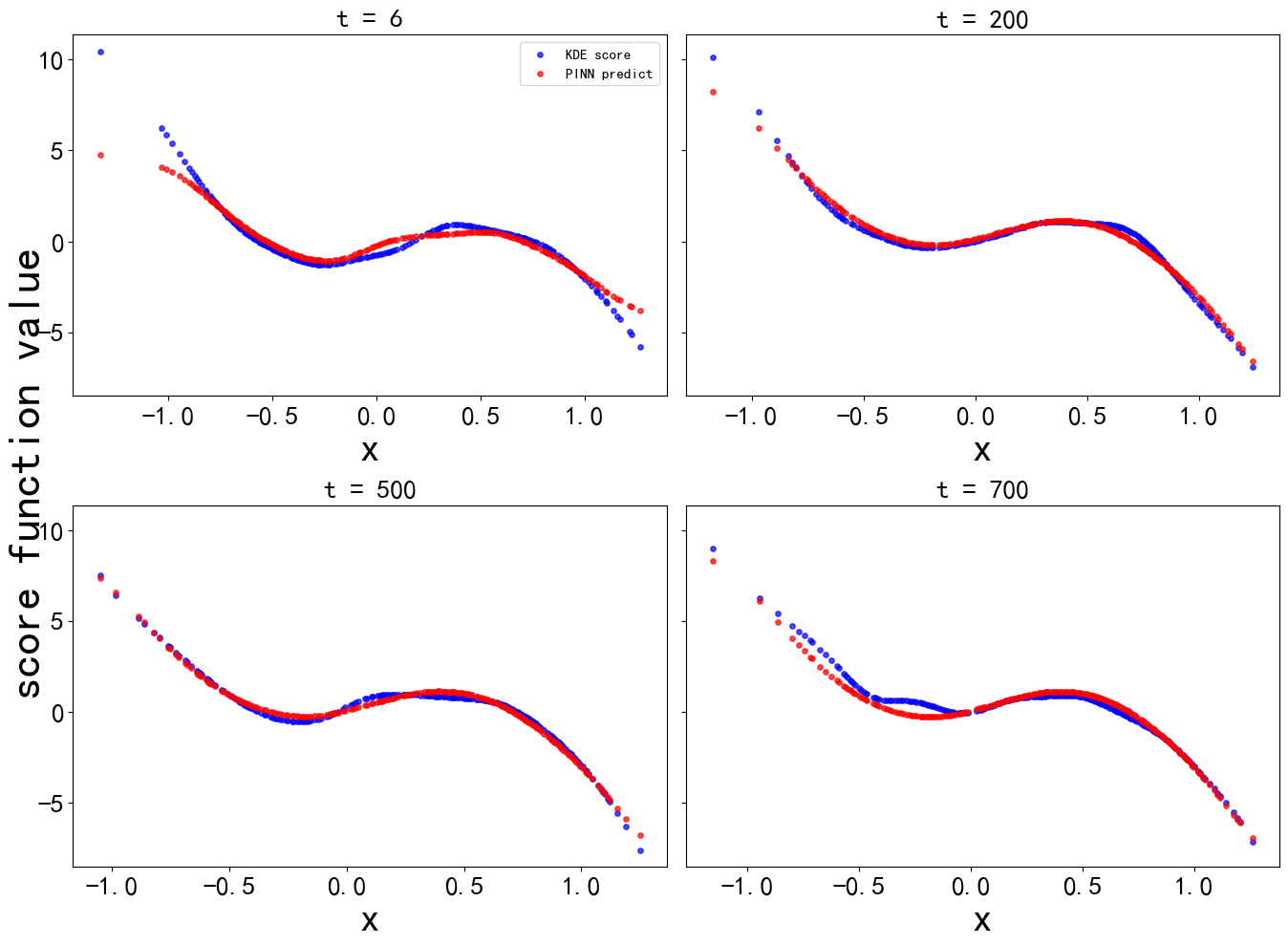} &
\includegraphics[width=0.23\linewidth]{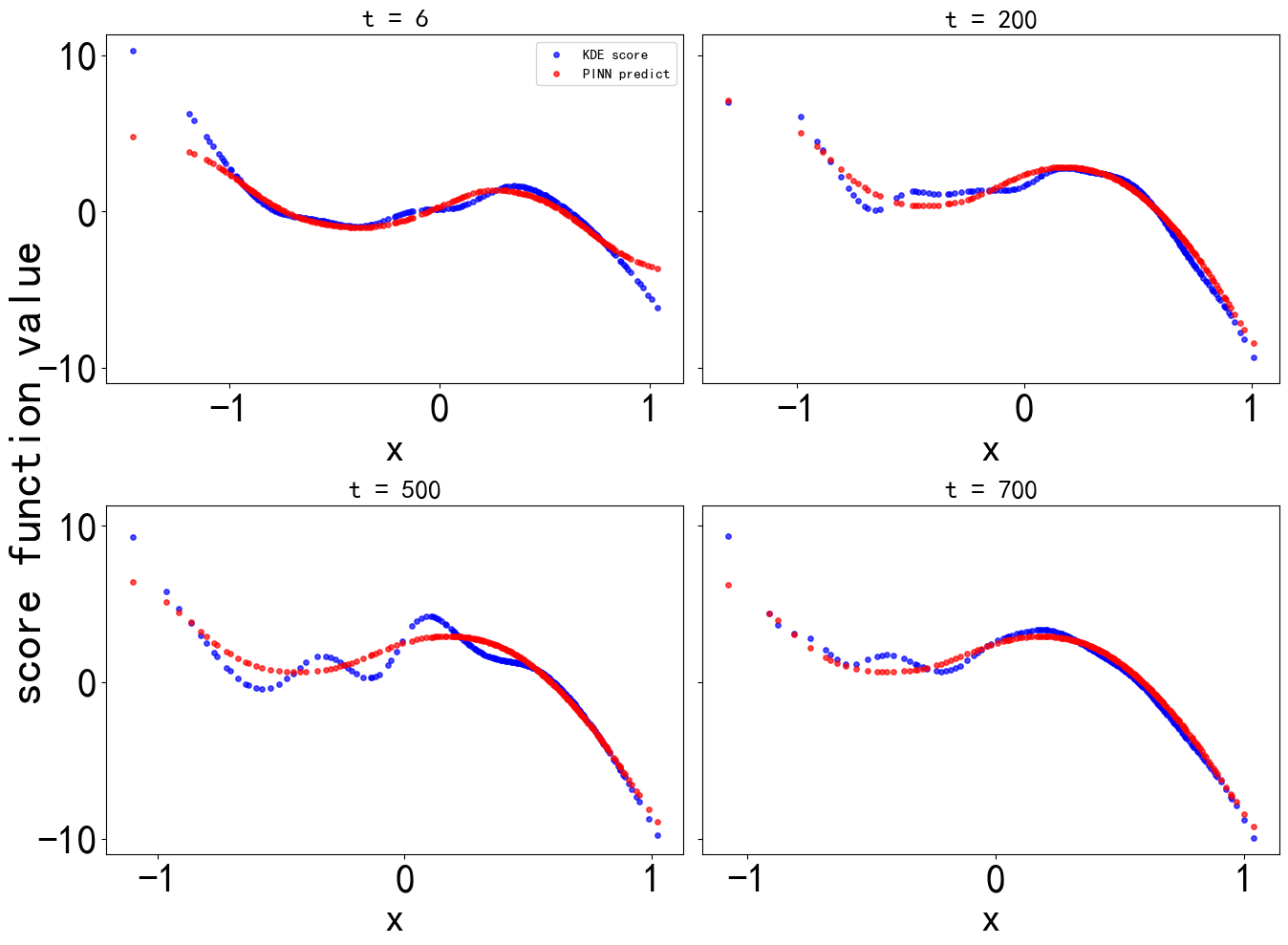} \\
\addlinespace[3pt]
(f)Score Function & 
\includegraphics[width=0.23\linewidth]{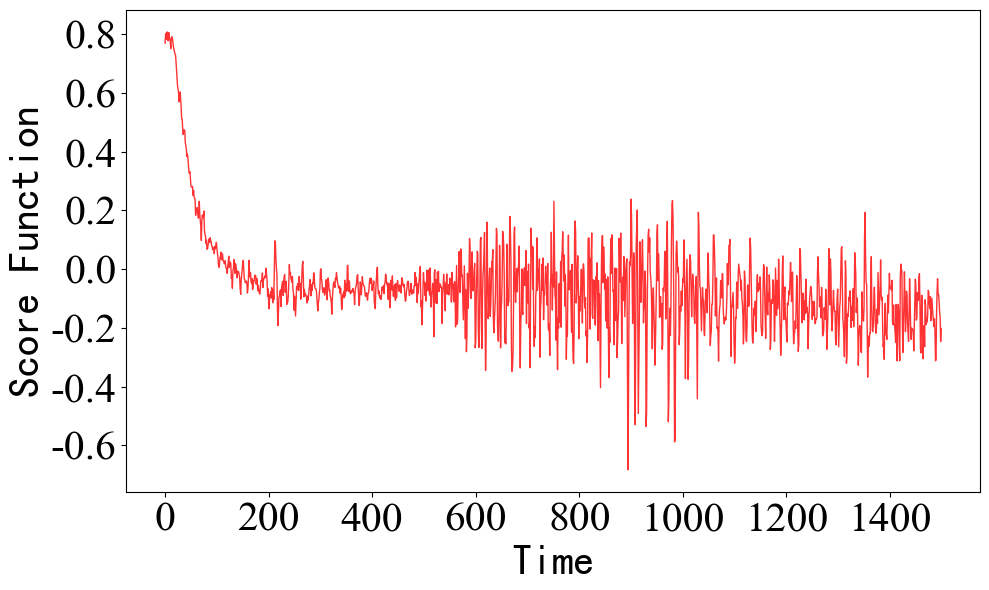} &
\includegraphics[width=0.23\linewidth]{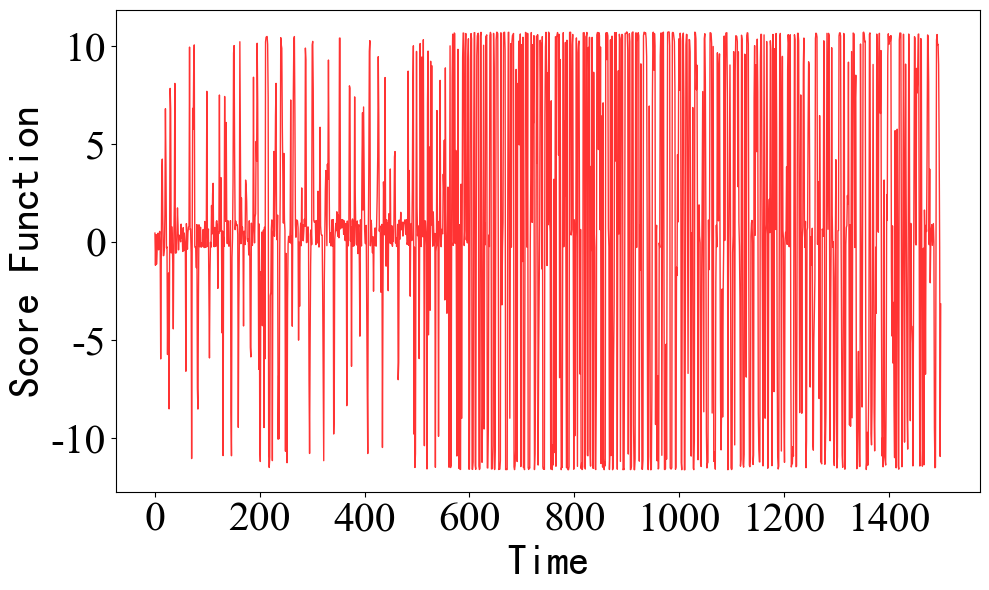} &
\includegraphics[width=0.23\linewidth]{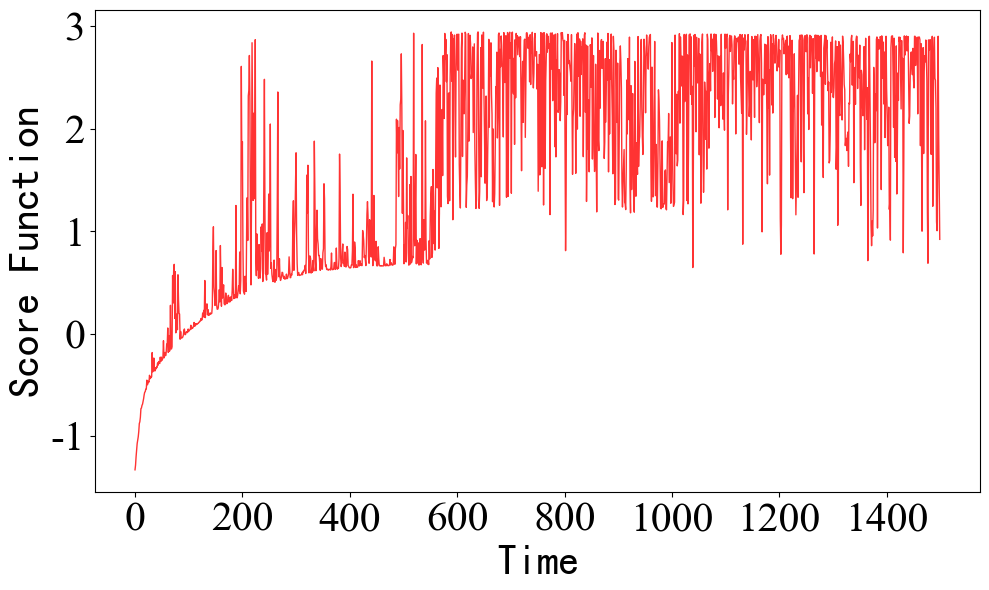} \\
\bottomrule
% 关键修改：用\multicolumn跨4列，p{0.95\textwidth}保证文本自动换行且不贴边
\multicolumn{4}{p{0.95\textwidth}}{%
  \textbf{} (a,d) The one-dimensional $\phi_1$ data is obtained by applying manifold learning methods to reduce the dimensionality of the data. (b,e) The score function S(x) learned by the trained Score Model (SM) (red) is quantitatively compared with the score function computed via kernel density estimation (blue) at four specific time points (t = 6, 200, 500, 700). (c,f) The temporal evolution of the score function S(t) learned by the SM is fully illustrated. This method employs a three-layer fully connected neural network to approximate the score function of a stochastic differential equation (SDE), with the hidden layer dimension fixed at 218. Training is performed using 1,000 sampled trajectories and the Adam optimizer with a learning rate of 1e-3 and a step size $dt = 0.0625$.
} \\
\end{tabular}
\end{table}

\subsection{Early Warning Indicator}

\hspace*{1em} This study is based on a time series comprising 1500 data points. We employ a sliding window approach to calculate the variation of each indicator over time and compare it with the standard deviation (Std). The window length is set to \(l=300\), and the sliding step is \(\Delta t=0.0625\). For the Onsager-Machlup (OM) functional indicator, we define it as the ratio of two OM functionals (Eq. (11)) in each window, i.e.,
\begin{equation}
    \mathrm{OM}(l,t) / \mathrm{OM}(0.9l, t-0.9l).
\end{equation}
Meanwhile, we propose a new early warning indicator SF, which is defined as the integral of the square of the score function averaged over the state variable \(x\) (Eq. (13)). To maintain consistency in the calculation, the SF indicator also adopts the ratio form, i.e., 
\begin{equation}
    \mathrm{SF}(l,t) / \mathrm{SF}(0.9l, t-0.9l).
\end{equation}

As shown in Table~\ref{tab: Indicator}(b,e), the standard deviation (Std) curves (green curves) derived from the four dimensionality reduction methods all exhibit interpretable patterns. In the dimensionality-reduced data, both the OM and SF indicators issue warning signals prior to the sharp rise in standard deviation. Specifically, the warning signals based on manifold learning show relatively gradual changes and higher noise levels around \(T \approx 600\).

Compared to the OM indicator, the SF indicator demonstrates higher sensitivity and robustness. For instance, as shown for OLPP in Table~\ref{tab: Indicator}(c), the SF indicator not only accurately identifies the critical point at \(T \approx 600\) but is also highly responsive to dynamic characteristics at later time points such as \(T \approx 1000\) and \(T \approx 1300\). Although the signal patterns vary across different dimensionality reduction methods, the SF indicator consistently captures key transitions across multiple approaches.

While the computation of the OM indicator depends on solving stochastic differential equations, the SF indicator is derived directly from the predicted score function. Furthermore, compared to the OM indicator, the SF indicator provides earlier and more pronounced warnings of critical states and more reliably identifies the dynamic transition characteristics across different phases before and after epileptic seizures.

\begin{table}[htbp]
\centering
\caption{OM and SF Indicator Across Manifold Learning Methods}
\label{tab: Indicator}
% 表格是lccc共4列：l(左对齐) + 3个c(居中)
\begin{tabular}{lccc}
\toprule
\textbf{Result/Method} & \textbf{OLPP} & \textbf{DM} & \textbf{Isomap} \\
\midrule
(a)$\phi_1$ Data & 
\includegraphics[width=0.23\linewidth]{fig2/phi1_OLPP.png} &
\includegraphics[width=0.23\linewidth]{fig2/phi1_DM.png} &
\includegraphics[width=0.23\linewidth]{fig2/phi1_Isomap.png} \\
\addlinespace[3pt]
(b)OM Indicator & 
\includegraphics[width=0.23\linewidth]{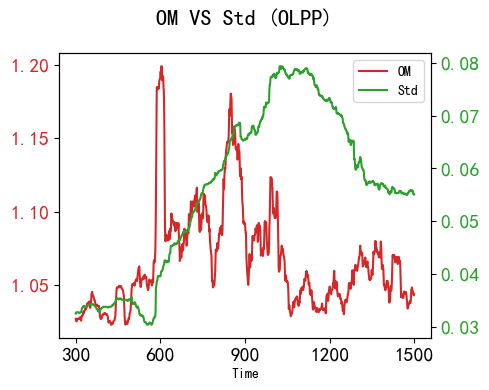} &
\includegraphics[width=0.23\linewidth]{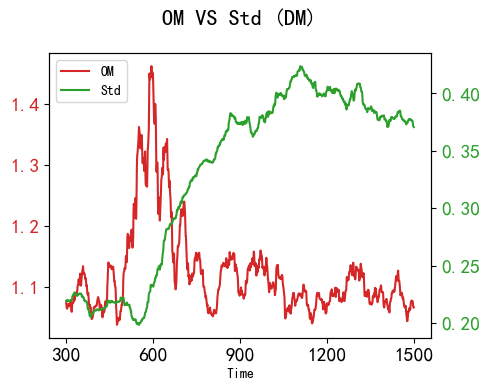} &
\includegraphics[width=0.23\linewidth]{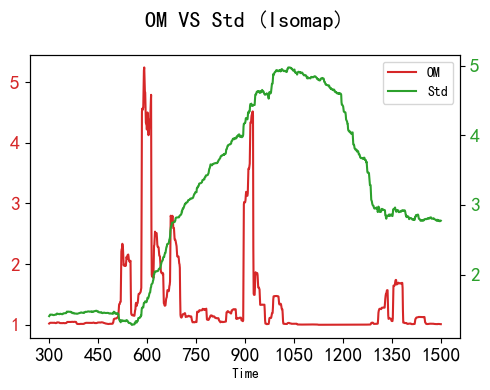} \\
\addlinespace[3pt]
(c)SF Indicator & 
\includegraphics[width=0.23\linewidth]{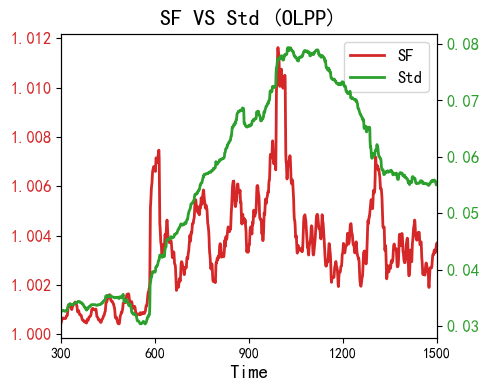} &
\includegraphics[width=0.23\linewidth]{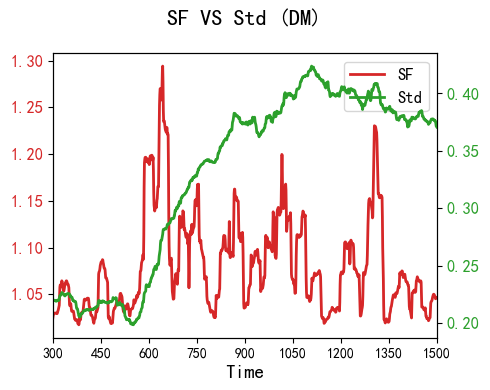} &
\includegraphics[width=0.23\linewidth]{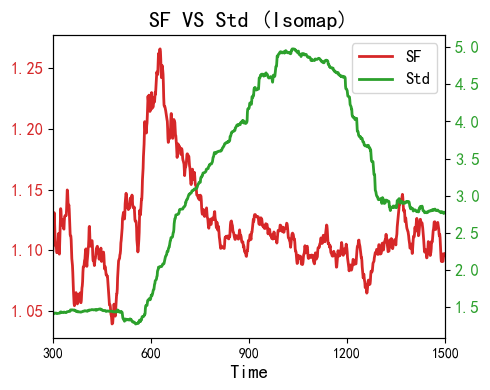} \\
\midrule
\textbf{Result/Method} & \textbf{NPE} & \textbf{PCA} & \textbf{Kernel PCA} \\
\midrule
(d)$\phi_1$ Data& 
\includegraphics[width=0.23\linewidth]{fig2/phi1_NPE.png} &
\includegraphics[width=0.23\linewidth]{fig2/phi1_PCA.png} &
\includegraphics[width=0.23\linewidth]{fig2/phi1_Kernel_PCA.png} \\
\addlinespace[3pt]
(e)OM Indicator & 
\includegraphics[width=0.23\linewidth]{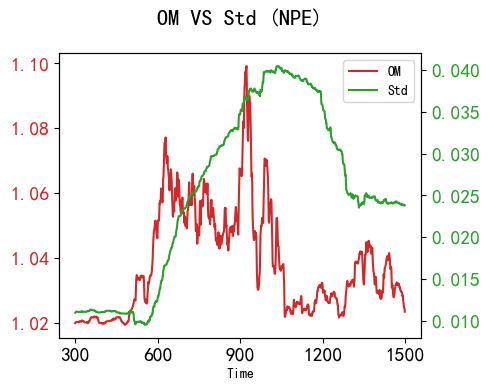} &
\includegraphics[width=0.23\linewidth]{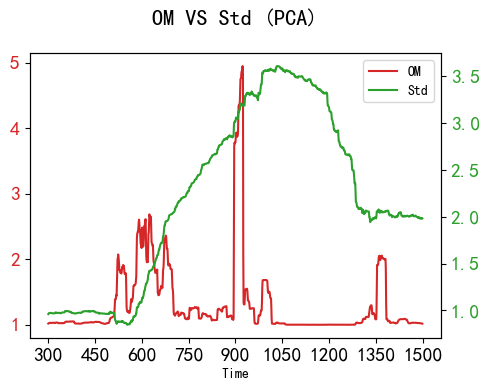} &
\includegraphics[width=0.23\linewidth]{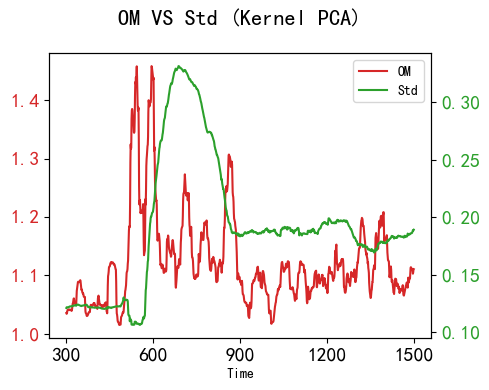} \\
\addlinespace[3pt]
(f)SF Indicator & 
\includegraphics[width=0.23\linewidth]{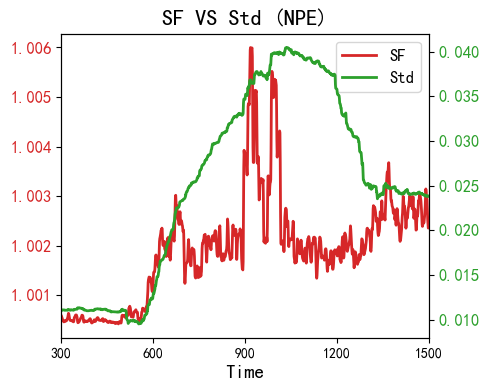} &
\includegraphics[width=0.23\linewidth]{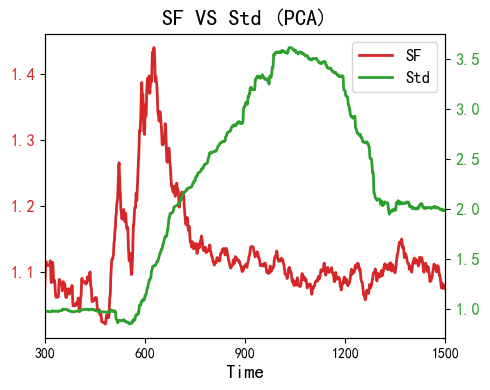} &
\includegraphics[width=0.23\linewidth]{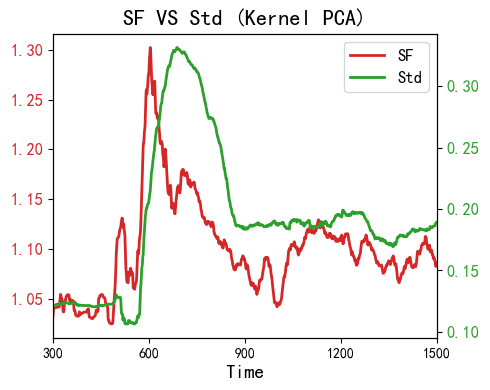}\\
\bottomrule
\multicolumn{4}{p{0.95\textwidth}}{%
  \textbf{} Based on a time-series comprising 1500 data points, a sliding-window analysis was employed (window length l = 300, step size $dt = 0.0625$). (a, d) Principal variable $\phi_1$ obtained after dimensionality reduction via different methods. (b, e) Comparison between the OM early-warning indicator (red, calculated as Eq. (18)) and the standard deviation (Std) of $\phi_1$ (green). (c, f) Comparison between the SF early-warning indicator (red, calculated as Eq. (19)) and the Std of $\phi_1$ (green). All indicators are presented in ratio form. The time axis ranges from 300 to 1500.
} \\
\end{tabular}
\end{table}

\subsection{Validation}

\hspace*{1em}To validate the generalization performance of the proposed epilepsy prediction method, we tested it on a new EEG dataset. These data were collected from epilepsy patients during pre-ictal and ictal phases at a sampling frequency of 256 Hz. Based on medical annotations, \( T = 780 \) was defined as the critical time point for seizure onset.

The raw data underwent standard preprocessing prior to analysis. First, amplitude normalization scaled the signals to the range \([-0.5, 0.5]\) to mitigate inter-individual amplitude variations. Subsequently, downsampling was performed by selecting one point from every 16 consecutive points. This step preserved critical dynamic information while reducing the effective sampling interval to 0.0625 seconds, thereby enhancing computational efficiency for subsequent steps.

The preprocessed data were then separately input into the six manifold learning-based dimensionality reduction models. These models effectively capture essential changes in brain network connectivity and dynamics associated with epileptic seizures. As shown in Table \ref{tab: new data indicator}(a,d), the dimensionally-reduced data clearly reveal the evolutionary trajectory from the pre-ictal to the ictal state on a low-dimensional manifold, confirming that this feature extraction method can effectively distinguish between different brain states.

Subsequently, we applied the trained score-matching model to predict the score function and employed a sliding-window technique to dynamically compute early-warning indicators within the Schrödinger bridge framework, enabling online and continuous seizure monitoring. The resulting time-series plots of the OM and SF indicators are presented in Table \ref{tab: new data indicator}.

\begin{table}[htbp]
\centering
\caption{Indicator Results on New Data Across Manifold Learning Methods}
\label{tab: new data indicator}
% 表格是lccc共4列：l(左对齐) + 3个c(居中)
\begin{tabular}{lccc}
\toprule
\textbf{Result/Method} & \textbf{OLPP} & \textbf{DM} & \textbf{Isomap} \\
\midrule
(a)$\phi_2$ Data & 
\includegraphics[width=0.23\linewidth]{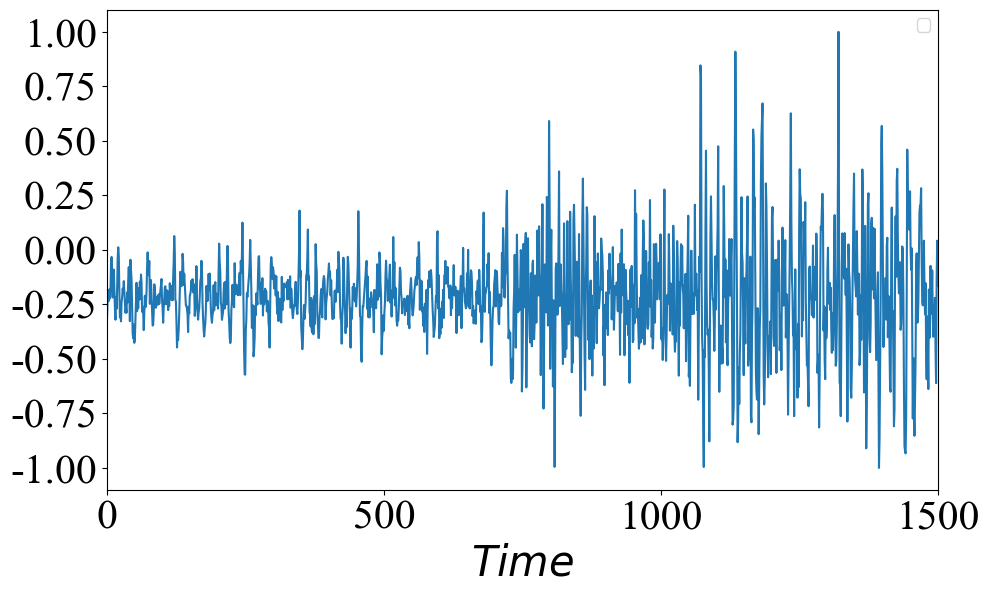} &
\includegraphics[width=0.23\linewidth]{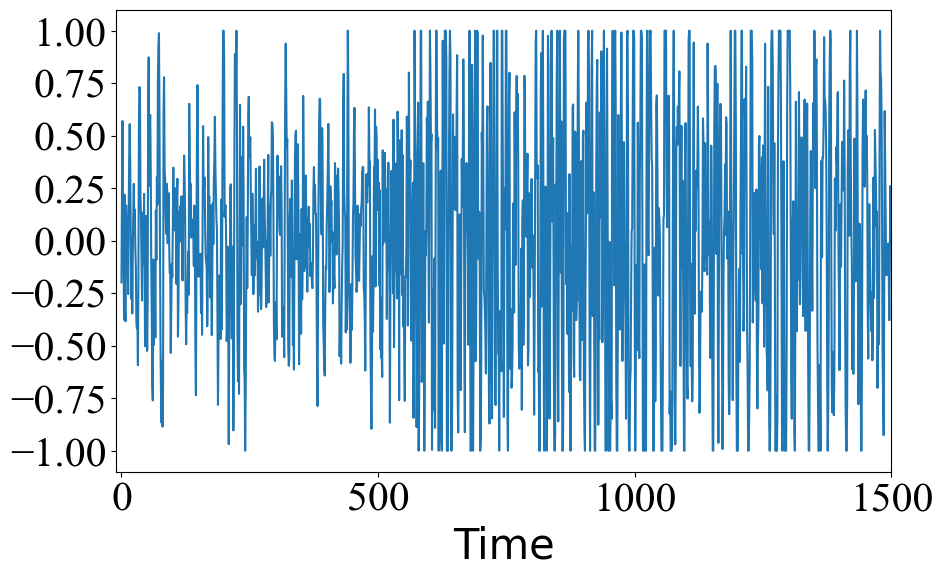} &
\includegraphics[width=0.23\linewidth]{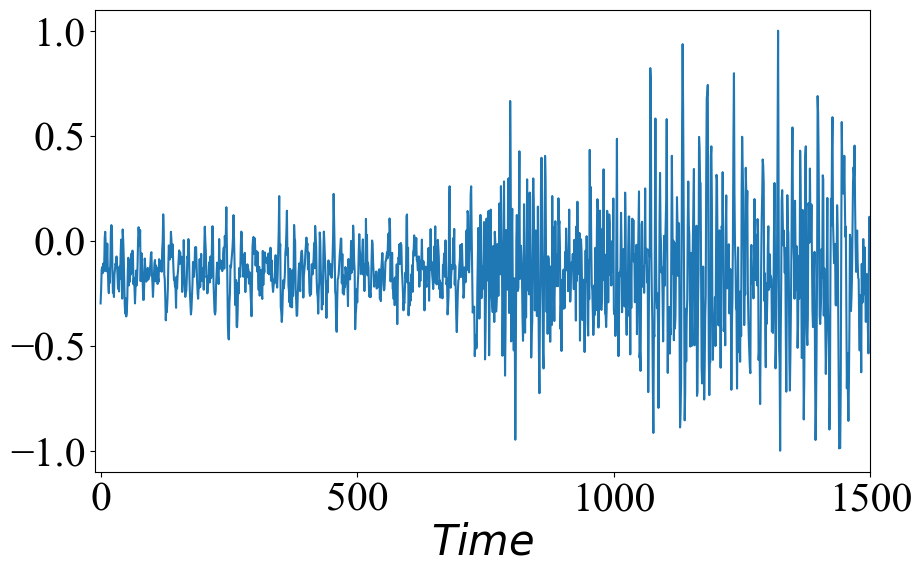} \\
\addlinespace[3pt]
(b)OM Indicator & 
\includegraphics[width=0.23\linewidth]{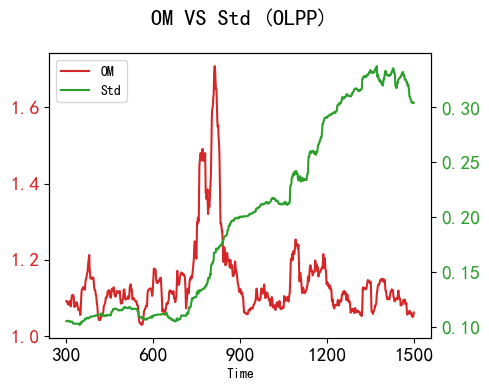} &
\includegraphics[width=0.23\linewidth]{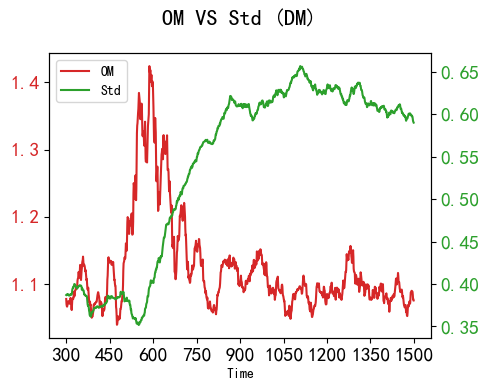} &
\includegraphics[width=0.23\linewidth]{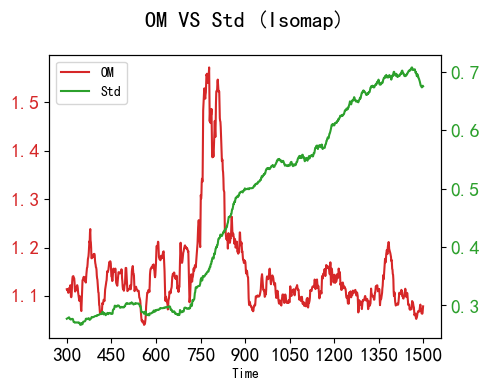} \\
\addlinespace[3pt]
(c)SF Indicator & 
\includegraphics[width=0.23\linewidth]{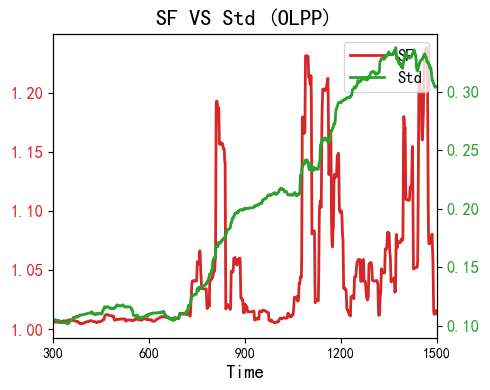} &
\includegraphics[width=0.23\linewidth]{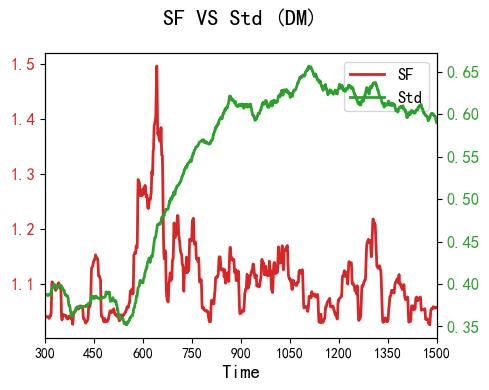} &
\includegraphics[width=0.23\linewidth]{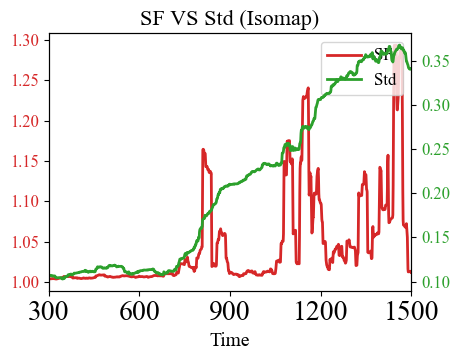} \\
\midrule
\textbf{Result/Method} & \textbf{NPE} & \textbf{PCA} & \textbf{Kernel PCA} \\
\midrule
(d)$\phi_2$ Data& 
\includegraphics[width=0.23\linewidth]{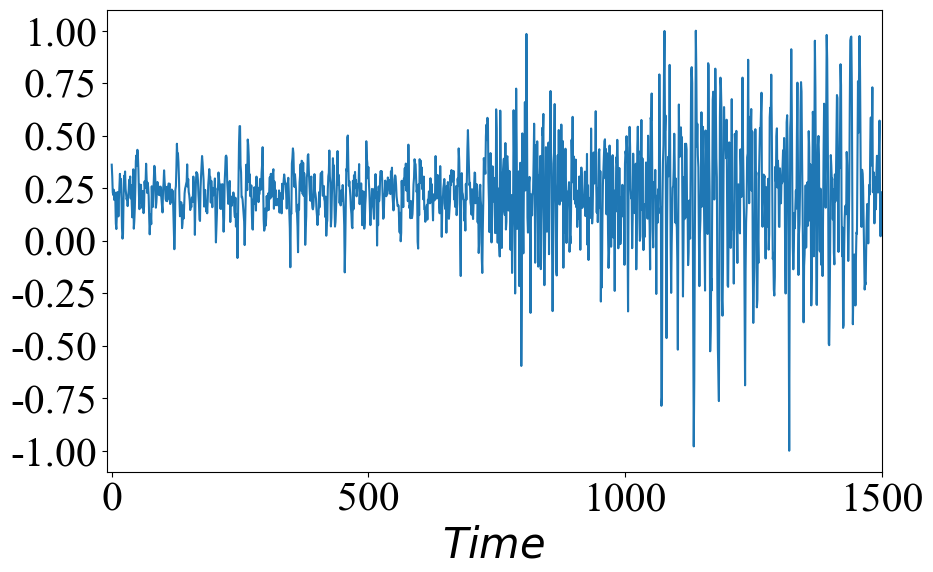} &
\includegraphics[width=0.23\linewidth]{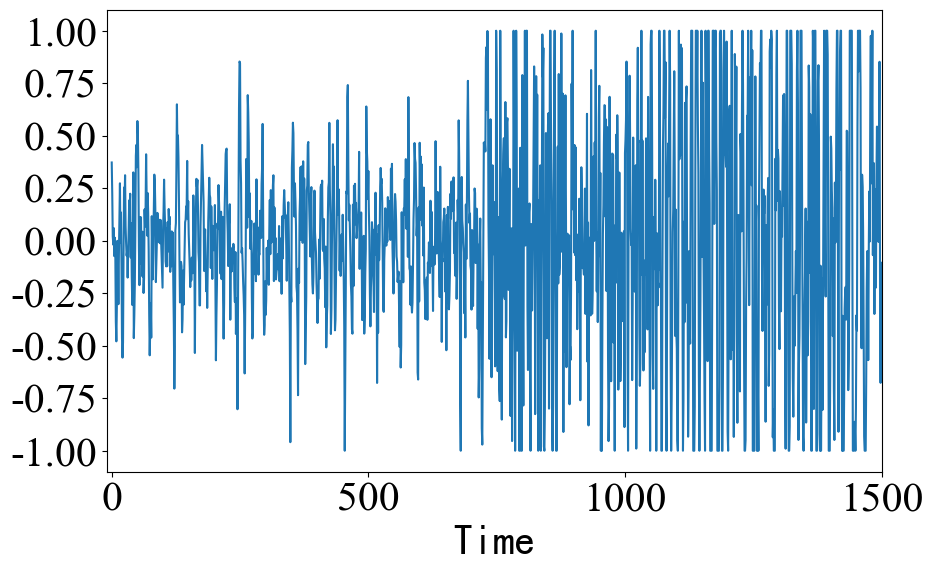} &
\includegraphics[width=0.23\linewidth]{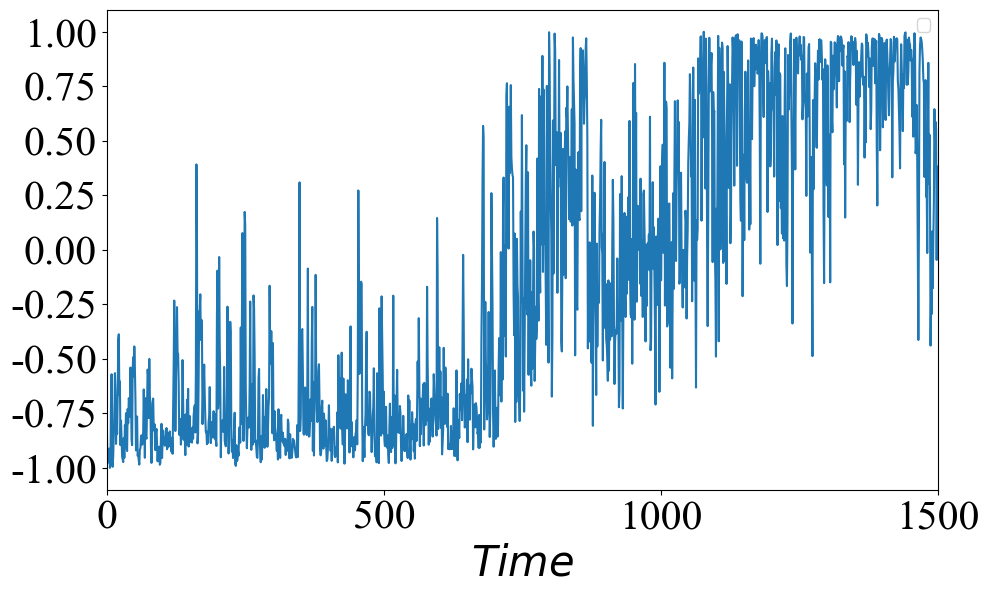} \\
\addlinespace[3pt]
(e)OM Indicator & 
\includegraphics[width=0.23\linewidth]{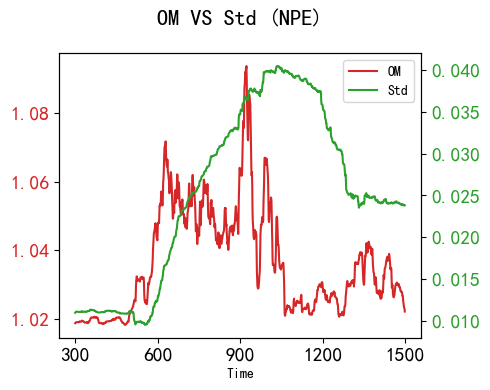} &
\includegraphics[width=0.23\linewidth]{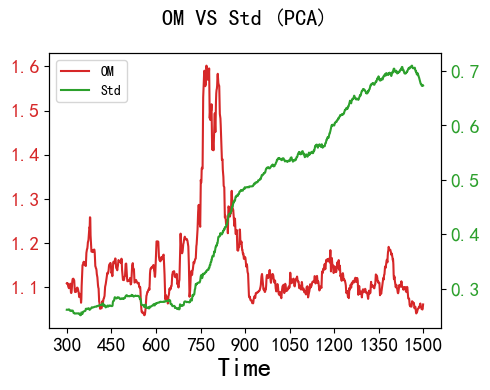} &
\includegraphics[width=0.23\linewidth]{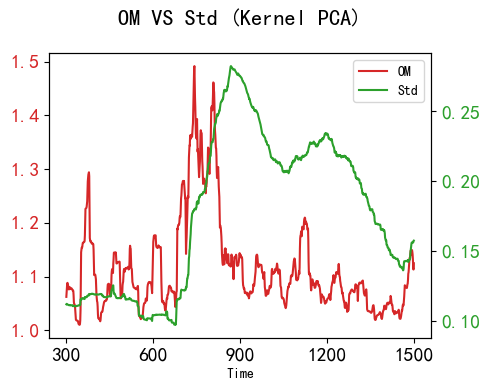} \\
\addlinespace[3pt]
(f)SF Indicator & 
\includegraphics[width=0.23\linewidth]{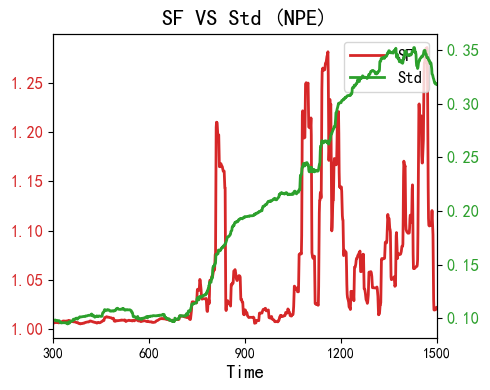} &
\includegraphics[width=0.23\linewidth]{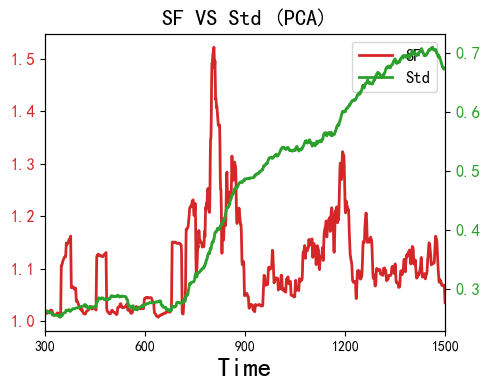} &
\includegraphics[width=0.23\linewidth]{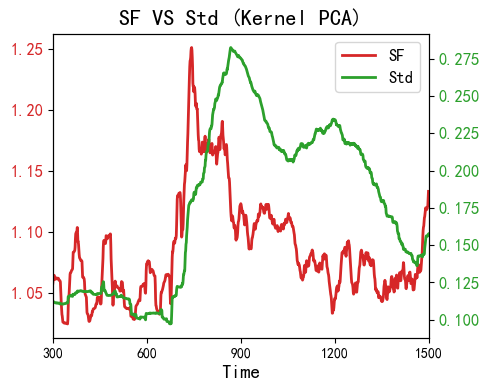} \\
\bottomrule
\multicolumn{4}{p{0.95\textwidth}}{%
  \textbf{} (a,d) The $\phi_2$ time series extracted via dimensionality reduction. (b,e) Comparison between the OM indicator (red) and the Std. (green). (c,f) Comparison between the SF indicator (red) and the Std (green). All indicators are computed using a sliding window (width l=300). Both the OM and SF indicators are derived within the framework of the previously learned SDE. The results show that both indicators issue an early-warning signal around \(T \approx 780\), demonstrating that the learned SDE model remains applicable to new data and can provide an effective stability warning.
} \\
\end{tabular}
\end{table}

Both warning indicators exhibited significant mutations or persistent supra-threshold phenomena around the 780th time point. This time point closely aligns with the clinically annotated onset of the epileptic seizure, and the warning signals appeared significantly earlier than the typical clinical symptoms of the ictal phase. These results strongly demonstrate that the early warning indicators constructed in this study, based on anisotropic dimensionality reduction and Schrödinger bridge theory, can effectively and preemptively detect the critical dynamic features of the brain's transition from the pre-ictal to the ictal state, validating the effectiveness and warning potential of the proposed method.

\section{Formal Error Analysis of the Early-warning Potential}
\hspace*{1em}To justify the reliability of the proposed SF indicator, we provide a high-probability error bound quantifying the deviation between the estimated $\widehat{\mathrm{SF}}(t)$ and its population counterpart $\mathrm{SF}^\ast(t)$. Our analysis builds upon the data processing pipeline in Figure~\ref{fig: Overall Framework Flowchart}: dimensionality reduction via Diffusion Map, SDE learning, and score function estimation through score matching.

\begin{assumption}
Let $\{X_t\}_{t\in[0,T]}$ be a stochastic process in $\mathbb{R}^D$ whose latent evolution lies near a smooth 1D manifold $\mathcal{M}$.
\begin{enumerate}[label=(A\arabic*), leftmargin=*, nosep]    
\item \textbf{(Manifold regularity)} $\mathcal{M}$ is $C^\infty$ and admits a $C^\infty$ embedding $\Phi_{\mathrm{true}}:\mathcal{M}\to\mathbb{R}$; the Diffusion Map estimator $\Phi_{\mathrm{DM}}$ satisfies            
\[             
\mathbb{E}\big[|\Phi_{\mathrm{DM}}(X_t) - \Phi_{\mathrm{true}}(X_t)|\big] \le \varepsilon_{\mathrm{man}} .            
\]            
\item \textbf{(SDE identifiability)} The true reduced dynamics $z_t = \Phi_{\mathrm{true}}(X_t)$ obeys            
\[            
dz_t = b^\ast(z_t)\,dt + \sigma^\ast(z_t)\,dW_t,            
\]
    with $b^\ast,\sigma^\ast\in C^\infty_b(\mathbb{R})$, and $0<\sigma_{\min}\le |\sigma^\ast(z)|\le\sigma_{\max}$ for all $z$.
    The estimated drift $\hat{b}$ and diffusion $\hat{\sigma}$ satisfy            
\[            
\| \hat{b}-b^\ast \|_\infty \le \varepsilon_b, \qquad \| \hat{\sigma}^2 - (\sigma^\ast)^2\|_\infty \le \varepsilon_\sigma .            
\]            
\item \textbf{(Density positivity)} The marginal density $p^\ast(z,t)>0$ and $\inf_{z,t} p^\ast(z,t)\ge p_{\min}>0$.            
\item \textbf{(Score regularity)} The true score $s^\ast(z,t)=\nabla_z\log p^\ast(z,t)$ belongs to Sobolev space $H^\beta([0,T]\times\mathbb{R})$ for some $\beta>1/2$, and satisfies $\|s^\ast\|_\infty\le S_{\max}$, $\|\nabla_z s^\ast\|_\infty\le L_s$.    
\item \textbf{(Score network)} $\hat{s}_\theta(z,t)$ is trained via score matching on $n = MN$ samples from $M$ SDE trajectories of length $N$, using a 3-layer MLP with hidden size $H$.
\end{enumerate}
\end{assumption}                                                                                                                                          

The following theorem quantifies how each stage of our pipeline contributes to SF uncertainty.

\begin{theorem}[High-probability error bound for Early-warning Potential]
\label{thm:EP_error}
Under the above assumptions, for any confidence level $\delta\in(0,1)$, with probability at least $1-\delta$ the estimated Early-warning Potential
\[
\widehat{\mathrm{SF}}(t) = \frac{1}{l}\int_{t-l}^{t}\frac{1}{n_\tau}\sum_{i=1}^{n_\tau}\|\hat{s}_\theta(z_i^{(\tau)},\tau)\|^2\,d\tau
\]
satisfies the uniform bound
\begin{align}
\sup_{t\in[l,T]} \big|\widehat{\mathrm{SF}}(t) - \mathrm{SF}^\ast(t)\big|
\le \underbrace{C_1\big(\varepsilon_{\mathrm{man}} + \varepsilon_b + \varepsilon_\sigma\big)}_{\text{manifold \& SDE error}}
+ \underbrace{C_2\big(\mathcal{E}_{\mathrm{app}} + \mathcal{E}_{\mathrm{est}} + \mathcal{E}_{\mathrm{opt}}\big)}_{\text{score approximation \& estimation error}}
+ \underbrace{ C_3/\sqrt{n} + C_4(\Delta t)^2}_{\text{sampling \& discretization}},
\label{eq:EP_error_bound}
\end{align}
where
\begin{align*}
\mathcal{E}_{\mathrm{app}} &\lesssim H^{-\beta}, &
\mathcal{E}_{\mathrm{est}} &\lesssim \sqrt{\frac{HL\log H + \log(2/\delta)}{n}}, &
\mathcal{E}_{\mathrm{opt}} &\le \varepsilon_{\mathrm{opt}}.
\end{align*}
The constants depend only on $S_{\max},L_s,p_{\min},\sigma_{\min},\sigma_{\max},T,l$, and are given explicitly by
\[
\begin{aligned}
C_1 &= \frac{2S_{\max}}{p_{\min}}\Bigl(1 + \frac{L_s T}{\sigma_{\min}^2}\Bigr),\\
C_2 &= 2S_{\max},\\
C_3 &= 2S_{\max}^2,\\
C_4 &= \frac{2}{3} \bigl(\|\partial_t s^\ast\|_\infty^2 + 2S_{\max} \|\partial_t\nabla s^\ast\|_\infty\bigr) l.
\end{aligned}
\]
\end{theorem}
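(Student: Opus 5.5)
The plan is to decompose the total error with a telescoping chain of intermediate quantities, bound each link separately, and add the bounds. Let $\mathrm{SF}^\ast(t)=\frac1l\int_{t-l}^t\mathbb{E}_{p^\ast(\cdot,\tau)}\|s^\ast(z,\tau)\|^2d\tau$ be the population target. Introduce three auxiliary quantities. The first, $\mathrm{SF}^{(1)}$, is the same functional with the density and score $(\tilde p,\tilde s)$ of the learned SDE $(\hat b,\hat\sigma)$ driven from the Diffusion Map initial law. The second, $\mathrm{SF}^{(2)}$, replaces $\tilde s$ by the trained network $\hat s_\theta$ under the expectation. The third, $\mathrm{SF}^{(3)}$, replaces the exact time integral by the quadrature actually used, still with population expectations. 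The triangle inequality then gives four differences:
\begin{itemize}
\item $|\mathrm{SF}^\ast-\mathrm{SF}^{(1)}|$, the manifold and SDE error;
\item $|\mathrm{SF}^{(1)}-\mathrm{SF}^{(2)}|$, the score error;
\item $|\mathrm{SF}^{(2)}-\mathrm{SF}^{(3)}|$, the discretization error;
\item $|\mathrm{SF}^{(3)}-\widehat{\mathrm{SF}}|$, the Monte Carlo error.
\end{itemize}
Each difference produces one group of terms in \eqref{eq:EP_error_bound}.

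For the score-error and Monte Carlo terms I will use the identity $\big|\|a\|^2-\|b\|^2\big|\le(\|a\|+\|b\|)\|a-b\|\le 2S_{\max}\|a-b\|$. This requires clipping $\hat s_\theta$ at level $S_{\max}$, which I will assume is built into the estimator. With Cauchy–Schwarz, the score term is then at most $2S_{\max}\big(\mathbb{E}\|\hat s_\theta-\tilde s\|^2\big)^{1/2}$. This is the square root of the score matching excess risk, which I split in the usual way into approximation, estimation and optimization error.
\begin{itemize}
\item \emph{Approximation:} $H^{-\beta}$, from standard MLP approximation rates for $H^\beta$ functions under (A4).
\item \emph{Estimation:} $\sqrt{(HL\log H+\log(2/\delta))/n}$, from a Rademacher or pseudo-dimension bound for depth-$L$, width-$H$ networks applied to the bounded loss, with a union/confidence step at level $\delta/2$.
\item \emph{Optimization:} $\varepsilon_{\mathrm{opt}}$, taken directly as the residual training gap.
\end{itemize}
Together these give $C_2=2S_{\max}$. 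For the Monte Carlo term, each summand is bounded by $S_{\max}^2$, so Hoeffding's inequality with the remaining $\delta/2$ gives deviation $\lesssim S_{\max}^2\sqrt{\log(2/\delta)/n}$. A chaining argument over $t$ makes this uniform, since $t\mapsto\widehat{\mathrm{SF}}(t)$ is Lipschitz with constant $2S_{\max}^2/l$; this yields $C_3/\sqrt n$, with the logarithmic factor absorbed. For the discretization term, the trapezoid/midpoint rule on $\tau\mapsto\mathbb{E}\|s^\ast\|^2$ has error controlled by its second time derivative, namely $\|\partial_ts^\ast\|^2_\infty+S_{\max}\|\partial_t\nabla s^\ast\|_\infty$-type terms times $l(\Delta t)^2$. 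This produces $C_4$ up to the stated constant.

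The main obstacle is the manifold and SDE term, where one must turn drift and diffusion errors into a score error. I would first transfer the manifold error: the Diffusion Map law differs from the true one in Wasserstein-1 distance by at most $\varepsilon_{\mathrm{man}}$ by (A1). Since $\|\cdot\|^2$ of the score is Lipschitz in $z$ with constant $2S_{\max}L_s$ and the densities are bounded below by $p_{\min}$, this costs $O(S_{\max}\varepsilon_{\mathrm{man}}/p_{\min})$. For the dynamics perturbation I would compare the Fokker–Planck equations for $p^\ast$ and $\tilde p$. The difference $u=\tilde p-p^\ast$ solves a linear parabolic equation with source $-\partial_z((\hat b-b^\ast)p^\ast)+\frac12\partial_z^2((\hat\sigma^2-\sigma^{\ast2})p^\ast)$. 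A Girsanov/relative-entropy estimate bounds $\mathrm{KL}(\tilde P\|P^\ast)$ on $[0,T]$ by $T\varepsilon_b^2/\sigma_{\min}^2$ plus a diffusion-mismatch term. To upgrade from density closeness to score closeness, I would write $\tilde s-s^\ast=\nabla\log(\tilde p/p^\ast)$ and control it using $p_{\min}$, the gradient bound $L_s$, and parabolic smoothing over time $T$. This is where the factor $(1+L_sT/\sigma_{\min}^2)/p_{\min}$ in $C_1$ should come from. I expect this step to be the weakest: a clean linear (rather than square-root) dependence on $\varepsilon_b,\varepsilon_\sigma$ may require an extra regularity assumption on the score of the perturbed process, and I would state that assumption explicitly if needed.
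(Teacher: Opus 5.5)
Your decomposition has the same skeleton as the paper's: a telescoping split into misspecification, score, and sampling/quadrature errors, followed by a union bound. Your score, Monte Carlo and quadrature steps use the same tools as its Steps (II) and (I), and share the same looseness. A $\sqrt{\log(1/\delta)}$ factor is absorbed into the supposedly $\delta$-free $C_3$, and the $n=MN$ trajectory samples are treated as independent.

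The genuine gap is the misspecification step, and it is created by your choice of intermediate. Since $\mathrm{SF}^{(1)}$ pairs the learned law $\tilde p$ with \emph{its own} score $\tilde s=\nabla\log\tilde p$, bounding $|\mathrm{SF}^\ast-\mathrm{SF}^{(1)}|$ linearly in $\varepsilon_{\mathrm{man}}+\varepsilon_b+\varepsilon_\sigma$ needs a pointwise stability estimate for $\nabla\log(\tilde p/p^\ast)$. The hypotheses (A1)--(A5) do not supply one:
\begin{itemize}
\item (A3)--(A4) regularize only $p^\ast$ and $s^\ast$; nothing bounds $\tilde s$ or $\nabla\tilde s$.
\item (A2) controls $\hat b-b^\ast$ and $\hat\sigma^2-(\sigma^\ast)^2$ only in sup norm. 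That is not enough for gradient estimates on the Fokker--Planck difference $u$ you wrote down.
\item A $W_1$-small change of the initial law coming from the Diffusion Map says nothing about the score at small $\tau$.
\end{itemize}
Two further problems. Girsanov cannot absorb the diffusion mismatch: for $\hat\sigma\neq\sigma^\ast$ the path laws are mutually singular (their quadratic variations differ), so the path-space relative entropy is $+\infty$. And (A3) cannot literally hold for a probability density on all of $\mathbb{R}$, so it gives you no purchase on a pointwise log-ratio bound. Your sentence about where $C_1$ ``should come from'' is therefore a hope, not a derivation.

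The paper avoids this by using a different intermediate, $\overline{\mathrm{SF}}(t)=\frac1l\int_{t-l}^{t}\mathbb{E}_{\hat p_\tau}\|s^\ast(\cdot,\tau)\|^2\,d\tau$: the \emph{true} score integrated against the \emph{estimated} law. Misspecification then becomes a change of measure for the fixed integrand $\|s^\ast\|^2$. This integrand is bounded by $S_{\max}^2$ and is $2S_{\max}L_s$-Lipschitz, so the score of the perturbed law never appears. With this intermediate, your existing tools mostly suffice:
\begin{itemize}
\item Your $W_1$ coupling gives $2S_{\max}L_s\varepsilon_{\mathrm{man}}$, with no $1/p_{\min}$ needed.
\item Your Girsanov--Pinsker estimate handles the drift part of $\|p^\ast_\tau-\hat p_\tau\|_{L^1}$ linearly in $\varepsilon_b$.
\item For the rest, the paper cites a Fokker--Planck stability bound $\|p^\ast_\tau-\hat p_\tau\|_{L^1}\le (T/\sigma_{\min}^2)(\varepsilon_b+\varepsilon_\sigma)$.
\end{itemize}
The paper reaches the stated $C_1$ only through an unproved ``relative error $\propto 1/p_{\min}$'' step, so do not try to reproduce that constant exactly.

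The price of this intermediate is paid in the score step, which must now bound $\|\hat s_\theta-s^\ast\|_{L^2(\hat p\otimes\mathrm{Unif}[0,T])}$. You are right that a network trained on samples of the learned SDE actually targets $\nabla\log\hat p$, not $s^\ast$; the paper simply assumes score-matching theory delivers closeness to $s^\ast$. So adopting its intermediate makes your open step tractable with the tools you already have. But it moves the $\nabla\log\hat p$ versus $s^\ast$ discrepancy into the score-error term rather than proving it small.
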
                                                                                                                                              

\begin{proof}
Define auxiliary quantities:
\begin{itemize}[leftmargin=*, nosep]    
\item $\widetilde{\mathrm{SF}}(t) := \frac{1}{l}\int_{t-l}^{t}\mathbb{E}_{\hat{z}_\tau\sim \hat{p}(\cdot,\tau)}\big[\|\hat{s}_\theta(\hat{z}_\tau,\tau)\|^2\big] d\tau$ — EP under estimated SDE $\hat{p}$ using estimated score;        
\item $\overline{\mathrm{SF}}(t) := \frac{1}{l}\int_{t-l}^{t}\mathbb{E}_{\hat{z}_\tau\sim \hat{p}(\cdot,\tau)}\big[\|s^\ast(\hat{z}_\tau,\tau)\|^2\big] d\tau$ — SF under estimated SDE but \emph{true} score;
\item $\mathrm{SF}^\ast(t) := \frac{1}{l}\int_{t-l}^{t}\mathbb{E}_{z_\tau\sim p^\ast(\cdot,\tau)}\big[\|s^\ast(z_\tau,\tau)\|^2\big] d\tau$ — population SF.
\end{itemize}

We decompose the total error as
\[
|\widehat{\mathrm{SF}}(t) - \mathrm{SF}^\ast(t)|
\le |\widehat{\mathrm{SF}} - \widetilde{\mathrm{SF}}|
+ |\widetilde{\mathrm{SF}} - \overline{\mathrm{SF}}|
+ |\overline{\mathrm{SF}} - \mathrm{SF}^\ast|.
\]

\noindent\textbf{Step (I): Sampling and Discretization Error.}
Let $z_i^{(\tau)}\sim\hat{p}(\cdot,\tau)$ i.i.d., and approximate the time integral by trapezoidal rule with step $\Delta t$.
Since $\|\hat{s}_\theta\| \le S_{\max} + \varepsilon_s \le 2S_{\max}$ with high probability, Hoeffding’s inequality yields for each $\tau$,
\[
\mathbb{P}\Bigl(\big|\tfrac{1}{n_\tau}\sum_i\|\hat{s}_\theta\|^2 - \mathbb{E}\|\hat{s}_\theta\|^2\big|\ge \eta\Bigr)
\le 2\exp\!\Bigl(-\tfrac{n_\tau \eta^2}{2(2S_{\max})^4}\Bigr).
\]
Applying a union bound over at most $T/\Delta t$ discretization points and integrating over $[t-l,t]$, we obtain with probability at least $1-\delta/3$,
\[
\sup_{t} |\widehat{\mathrm{SF}}(t)-\widetilde{\mathrm{SF}}(t)|
\le C_3/\sqrt{n} + C_4(\Delta t)^2,
\]
where $C_3 = 2S_{\max}^2$ and $C_4 = \frac{2}{3} \bigl(\|\partial_t s^\ast\|_\infty^2 + 2S_{\max} \|\partial_t\nabla s^\ast\|_\infty\bigr) l$.

\noindent\textbf{Step (II): Score Estimation Error.}
Using $|a^2 - b^2| \le |a-b|(|a|+|b|) \le 2S_{\max}\|\hat{s}_\theta - s^\ast\|$ (since both scores are bounded by $S_{\max}$ w.h.p.), we have
\[
\big|\widetilde{\mathrm{SF}}(t)-\overline{\mathrm{SF}}(t)\big|
\le 2S_{\max} \|\hat{s}_\theta - s^\ast\|_{L^2(\hat{p}\otimes\mathrm{Unif}[0,T])}.
\]
Standard score matching theory \cite{hu2024score, song2020sliced} implies that, with probability at least $1-\delta/3$,
\[
\|\hat{s}_\theta - s^\ast\|_{L^2} \le \mathcal{E}_{\mathrm{app}} + \mathcal{E}_{\mathrm{est}} + \mathcal{E}_{\mathrm{opt}},
\]
with $\mathcal{E}_{\mathrm{app}} \lesssim H^{-\beta}$ \cite{YAROTSKY2017103},
$\mathcal{E}_{\mathrm{est}} \lesssim \sqrt{(HL\log H + \log(3/\delta))/n}$ \cite{pmlr-v65-harvey17a},
and $\mathcal{E}_{\mathrm{opt}}$ the optimization residual.
Thus, $|\widetilde{\mathrm{SF}} - \overline{\mathrm{SF}}| \le C_2(\mathcal{E}_{\mathrm{app}} + \mathcal{E}_{\mathrm{est}} + \mathcal{E}_{\mathrm{opt}})$ with $C_2 = 2S_{\max}$.

\noindent\textbf{Step (III): Model Misspecification Error.}
Let $z_t = \Phi_{\mathrm{true}}(X_t)$ and $\hat{z}_t = \Phi_{\mathrm{DM}}(X_t)$.
\begin{enumerate}[leftmargin=*, nosep]        
\item \emph{Manifold error}: By (A1), $\mathbb{E}[|z_t - \hat{z}_t|] \le \varepsilon_{\mathrm{man}}$. Since $s^\ast(\cdot,\tau)$ is $L_s$-Lipschitz and $\|s^\ast\|_\infty \le S_{\max}$,
\[
\big| \|s^\ast(z_t,\tau)\|^2 - \|s^\ast(\hat{z}_t,\tau)\|^2 \big|     
\le 2S_{\max}L_s |z_t - \hat{z}_t|.
\]
Taking expectation under the natural coupling induced by $X_t$, and integrating over $\tau$, we obtain
\[
\big| \mathbb{E}_{p^\ast_\tau}[\|s^\ast\|^2] - \mathbb{E}_{\widetilde{p}_\tau}[\|s^\ast\|^2] \big|    
\le 2S_{\max}L_s \varepsilon_{\mathrm{man}}.
\]

\item \emph{SDE error}: Let $p^\ast(\cdot,\tau)$ and $\hat{p}(\cdot,\tau)$ denote the marginal laws of the true and estimated SDEs. Under (A2)–(A3), standard perturbation analysis of the Fokker–Planck equation (see e.g., \cite{pavliotis2014stochastic}) yields
\[
\|p^\ast(\cdot,\tau) - \hat{p}(\cdot,\tau)\|_{L^1} \le C_{\mathrm{FP}} (\varepsilon_b + \varepsilon_\sigma),
\quad C_{\mathrm{FP}} = \frac{T}{\sigma_{\min}^2}.
\]
Since $\|s^\ast\|^2 \le S_{\max}^2$, this implies
\[
\big| \mathbb{E}_{\hat{p}_\tau}[\|s^\ast\|^2] - \mathbb{E}_{p^\ast_\tau}[\|s^\ast\|^2] \big|    
\le S_{\max}^2 C_{\mathrm{FP}} (\varepsilon_b + \varepsilon_\sigma).
\]
However, the Early-warning Potential is defined as a Fisher information, which involves integration against the density. To control sensitivity to distributional perturbations, we invoke the positivity assumption (A3): $p^\ast(z,t) \ge p_{\min} > 0$. This allows us to bound the relative error in the expectation by a factor proportional to $1/p_{\min}$. Combining both effects and integrating over $\tau \in [t-l,t]$, we obtain
\[
\big|\overline{\mathrm{SF}}(t)-\mathrm{SF}^\ast(t)\big|
\le \frac{2S_{\max}}{p_{\min}}\Bigl(1 + \frac{L_s T}{\sigma_{\min}^2}\Bigr)(\varepsilon_{\mathrm{man}} + \varepsilon_b + \varepsilon_\sigma)
= C_1(\varepsilon_{\mathrm{man}} + \varepsilon_b + \varepsilon_\sigma).
\]
\end{enumerate}

Applying a union bound over the three error events (each controlled with failure probability at most $\delta/3$) and summing the bounds from Steps (I)–(III) yields the desired high-probability error bound~\eqref{eq:EP_error_bound}.
\end{proof}

\begin{remark}
The bound is tight in the sense that:
\begin{enumerate}
    \item $C_1$ blows up as $p_{\min}\to 0$, reflecting the well-known difficulty of score estimation near low-density regions (e.g., saddle points);
    \item $C_2$ scales linearly with $S_{\max}$, consistent with empirical observations that score magnitude peaks near critical transitions.
\end{enumerate}
Thus, Theorem~\ref{thm:EP_error} not only supports the empirical findings but also provides guidance for practical tuning (e.g., increasing $H$ or $n$ when SF variance is high).
\end{remark}

\section{Conclusion}

\hspace*{1em}This study proposes a critical early-warning framework that integrates manifold learning, stochastic system identification, and Schrödinger bridge theory, aiming to address the challenge of predicting state transitions in high-dimensional complex systems. The framework first employs manifold learning to reduce the dimensionality of high-dimensional observational signals, extracting directional characteristics of system evolution. Subsequently, a stochastic differential equation model is established using data-driven methods, and the score function characterizing the evolution of the system's probability density is robustly estimated with the aid of enhanced sampling techniques. On this basis, a novel SF indicator is proposed by formulating a Schrödinger bridge problem. This indicator quantifies the likelihood of significant state transitions in a probabilistic form and employs a proportional calculation method consistent with the OM functional, significantly enhancing the detection sensitivity to subtle early-stage perturbations in the system.

Experimental results demonstrate that, in the task of epilepsy seizure warning, the SF indicator exhibits superior overall performance compared to traditional methods. It not only identifies the primary critical point earlier and more clearly but also effectively captures specific dynamic fluctuations across different stages before and after seizures, even under low-noise conditions. Notably, this framework possesses broad applicability. Beyond neurodynamic systems, the method is equally applicable to other complex systems exhibiting critical transition phenomena, such as tropical rainforest desertification, the emergence of financial "black swan events," and the onset of strokes. In these systems, high-dimensionality, nonlinearity, and non-equilibrium characteristics are prevalent, and the proposed data-driven integrative framework offers a novel approach to extracting robust early-warning signals from complex observational data.

Looking ahead, the proposed framework can be extended to model a wider range of stochastic processes. An important direction is generalizing the current stochastic differential equation model based on Gaussian noise to systems driven by Lévy noise with heavy-tailed characteristics. Such models can more accurately describe severe fluctuations and discontinuous jump behaviors observed in systems like stock markets or extreme climate events. By developing corresponding methods for score function estimation and Schrödinger bridge solving, it is expected to achieve more precise identification of critical states in fields such as financial risk warning and extreme event prediction, thereby enhancing the framework's early-warning capability for strongly non-Gaussian, highly volatile complex systems.

In summary, this work constructs a complete analytical workflow for identifying critical transitions from high-dimensional data. The proposed SF indicator and modeling framework have not only proven effective in specific neuroscience applications but also provide a generalizable theoretical tool and methodological foundation for critical early warning in a broader range of complex system types. Furthermore, it opens new avenues for extending early-warning research towards non-Gaussian dynamics.

\section*{Acknowledgements}

\hspace*{1em}This work was partly supported by the NSFC  International Collaboration Fund for Creative Research Teams (Grant W2541005), the Guangdong Provincial Key Laboratory of Mathematical and Neural Dynamical Systems (Grant 2024B1212010004), the Cross Disciplinary Research Team on Data Science and Intelligent Medicine (2023KCXTD054), the Guangdong-Dongguan Joint Research Fund (Grant 2023A151514 0016), Dongguan Key Laboratory for Data Science and Intelligent Medicine, and the Guangdong-Dongguan Joint Research Grant (2023A1515140016).

%\subsection*{Funding}
%Name financially supporting bodies (written out in full), followed by the funding awardee and associated grant numbers (if applicable) in square brackets. 

\printbibliography

\end{document}